\numberwithin{equation}{section} 
\theoremstyle{plain}
  \newtheorem{prop}{Proposition}
  \newtheorem{lemma}{Lemma}
  \newtheorem{cor}{Corollary}
\theoremstyle{definition}
  \newtheorem{remark}{Remark}
   \newtheorem{con}{Conjecture}
  \numberwithin{prop}{section}
   \numberwithin{cor}{section}
   \numberwithin{remark}{section}
  \numberwithin{con}{section}
\title{\Large\bfseries Power spectra of Dyson's circular ensembles}%
\author{Peter J. Forrester${}^{(1)}$ and Nicholas S. Witte ${}^{(1,2)}$}
\date{}
\begin{document}

\maketitle

${}^{(1)}$
School of Mathematics and Statistics,  The University of Melbourne,
Victoria 3010, Australia. \: \: Email: {\tt pjforr@unimelb.edu.au}; \\

${}^{(2)}$
School of Mathematics and Statistics, Victoria University of Wellington, Wellington, New Zealand. 
 \: \: Email: {\tt n.s.witte@protonmail.com};

\bigskip

\begin{abstract}
\noindent 
The power spectrum is a Fourier series statistic associated with the covariances of the displacement from average
positions of the members of an eigen-sequence. When this eigen-sequence has rotational invariance,
as for the eigen-angles of Dyson's circular ensembles, recent work of Riser and Kanzieper has uncovered
an exact identity expressing the power spectrum in terms of the generating function for the conditioned gap
probability of having $k=0,\dots,N$ eigenvalues in an interval. These authors moreover showed
how for the circular unitary ensemble integrability properties of the generating function, via a particular Painlev{\'e} VI system,
imply a computational scheme for the corresponding power spectrum, and allow for the determination of its large $N$ limit. 
In the present work, these results are extended to the case of the circular orthogonal ensemble and circular symplectic ensemble,
where the integrability is expressed through four particular Painlev{\'e} VI systems for finite $N$, and two Painlev{\'e} III$'$ systems for the limit $N\to\infty$,
and also via corresponding Fredholm determinants. 
The relation between the limiting power spectrum $S_\infty(\omega)$, where $\omega$ denotes the Fourier variable,
and the limiting generating function for the conditioned gap probabilities is particular direct,
involving just a single integration over the gap endpoint in the latter. Interpreting this generating function as the characteristic function of
a counting statistic  allows for it to be shown that
 $S_\infty(\omega) \mathop{\sim} \limits_{\omega \to 0} {1 \over \pi \beta | \omega|}$, where $\beta$ is the Dyson index.
\end{abstract}

\vspace{3em}

\section{Introduction}
The classic series of papers on the foundations of random matrix theory,
by Dyson, and by Dyson and Mehta in collaboration (see reprints of the originals in
\cite{Po65}, as well as an extended commentary), as motivated by the consideration of
the statistical properties of the spectra of highly excited nuclei, places emphasis on
a small number of statistical quantities. One is the number variance $\Sigma^2(L) = {\rm Var}(
\mathcal N(L))$, where $\mathcal N(L))$ is the random variable specifying the number of 
eigenvalues in an interval of length $L$. Here length is measured in units of the mean
spacing between neighbouring eigenvalues, or equivalently with the latter normalised to unity.
Another is the two level form factor, defined as the Fourier transform of the two-point cluster
function, while a third is the distribution function for the spacing between successive eigenvalues
(in fact this latter statistic was introduced and its study initiated by Wigner; see again \cite{Po65} as
a convenient source for the original papers). Each has had significant impact on a range of 
subsequent applications \cite{BFPW81,Me04,Ha18}, and exhibit sufficiently rich mathematical content
to sustain theoretical investigation into recent years. Such contemporary references include \cite{FL20,SLMS21a,CGMY22,AGL21}
(number variance) \cite{Fo21a,CES23,FKLZ24,CG24} (two level form factor) and \cite{BFM17,Ni21,TRK24,Ni24} (spacing
distribution).

In subsequent times, starting with Odlyzko \cite{Od87} in his 1987 numerical study of the statistical properties
of the Riemann zeros at large height, then in the early 2000's by Rela\~{n}o and collaborators \cite{RGMRF02}
in the context of $1/f$ noise and quantum chaos, and most comprehensively in a series of recent works involving
Kanzieper and Riser  \cite{ROK17,ROK20,RK21,RK23,RTK23}, the power spectrum statistic $S_N(\omega)$
has been added to the above list. For a sequence of eigenvalues parametrised as angles $0 < \theta_1 < \cdots \theta_N < 2 \pi$,
the power spectrum statistic is defined by the Fourier sum
 \begin{equation}\label{1.1}
 S_N(\omega) = {1 \over N \Delta_N^2} \sum_{1 \le l,m\le N} \langle
 \delta \theta_l \delta \theta_m \rangle e^{i \omega ( l - m)}, \qquad \omega \in \mathbb R,
 \end{equation}
where $ \langle\cdot\rangle $ denotes the average with respect to an eigenvalue probability density function.
 Here $\delta \theta_l = \theta_l - \langle \theta_l  \rangle$ is the displacement from the mean $ \langle \theta_l  \rangle$
 and so $[\langle
 \delta \theta_l \delta \theta_m \rangle]_{l,m=1}^N$ is the covariance matrix of the level displacements. The quantity $\Delta_N := 2 \pi/N$ 
 is the mean spacing between eigenvalues. One observes $ S_N(\omega + 2 \pi) =  S_N(\omega) $ and $S_N(-\omega) = S_N(\omega)$,
 so it suffices to restrict $\omega$ to the interval $0 \le \omega \le \pi$.
 
 A primary feature of $S_N(\omega)$ is that it tends to a well defined $N \to \infty$ limit $S_\infty(\omega)$ \cite{ROK20,RK23}.
 Moreover, in \cite{RTK23} $S_\infty(\omega)$ has been shown to relate to the covariances of two nearest neighbour spacings between
 scaled eigenvalues. With regards to the latter, define $x_l = (N / 2 \pi) \theta_l$, $(l=1,2,\dots,)$, so that the average of $x_{l+1} - x_{l}$ is unity.
 Define $s_j(p) = x_{j+1+p} - x_j$ as the spacing between $x_j$ and its $(p+1)$-th neighbour to the right $x_{j+1+p}$, and consider the covariance
 ${\rm cov}_\infty  (s_j(0), s_{j+k}(0))$,\footnote{This covariance can be rewritten according to the identity \cite{BFPW81}
 $$
 {\rm cov}_\infty (s_j(0), s_{j+k+1}(0)) =  {1 \over 2} \Big ({\rm Var}_\infty(s_j(k+1)) - 2 {\rm Var}_\infty(s_j(k)) + {\rm Var}_\infty(s_j(k-1)) \Big ).
 $$
 }
 where the subscript indicates that  the limit $N \to \infty$
 has been taken. Assuming translation invariance, this is independent of $j$. The result of \cite{RTK23} gives that the power spectrum for
 $\{ {\rm cov} \, (s_j(0), s_{j+k}(0))_\infty \}_{k=0,\pm1,\dots}$ relates to $S_\infty(\omega)$ according to
  \begin{equation}\label{1.1x}
  \sum_{k=-\infty}^\infty {\rm cov}_\infty(s_j(0), s_{j+k}(0)) e^{i \omega k} = 4 \sin^2(\omega/2) S_\infty(\omega).
 \end{equation}  
 
 Our interest in the present work is to extend some of the integrability results of \cite{RK23} for the calculation of $S_N(\omega)$ in Dyson's
 circular unitary ensemble (CUE) --- equivalently Haar distributed unitary matrices from the classical group $U(N)$ --- and its  $N \to \infty$ 
 limit, to the two other circular ensembles introduced by Dyson \cite{Dy62}. These are the circular orthogonal ensemble (COE) of symmetric
 unitary matrices, and the  circular symplectic ensemble (CSE) of self-dual quaternion unitary matrices (see \cite[Ch.~2]{Fo10} for more on
 these ensembles). Specifically, integrability structures will be used to provide efficient numerical schemes for the computation of
 $S_N(\omega)$ in these cases, and most significantly its limiting form $S_\infty(\omega)$. 
 Sections \ref{S2.2}, \ref{S2.3} and \ref{S2.4} give the details of our methodology for the computation of 
 $S_N(\omega)$ for the CUE, COE and CSE respectively. Our methodology for the computation of the limiting quantity
 $S_\infty(\omega)$ for
 these ensembles is detailed in Section \ref{S3.2}, and in the COE and CSE cases tabulations are provided. This
 section also contains results on the finite size scaling associated with the convergence of $S_N(\omega)$ to $S_\infty(\omega)$ in the case of the COE.
  
  Beyond numerical considerations, our study contains several analytic properties of $S_N(\omega)$  and its $N \to \infty$ limit $S_\infty(\omega)$, these often
 applying more generally to the circular $\beta$ ensemble. These are the large $N$ asymptotic formula
 for $S_N(0)$ given in (\ref{2.4}), the first two terms of the small $\omega$ asymptotic form of $S_\infty(\omega)$
 given in (\ref{3.2g}), and the first three terms of the large $k$ asymptotic expansion of the covariances
 $ {\rm cov}_\infty  (s_j(0), s_{j+k}(0)) $ from (\ref{1.1x}) given in (\ref{3.9}).  
 An analytic result of a different type, relating to the large $t$ asymptotic form of a pair of sigma Painlev{\'e} III$'$  transcendents
 is given in Conjecture \ref{C3.1}.

 \section{Structured formulas for finite $N$}
 \subsection{Preliminaries}
 Generally the eigenvalue probability density function of the circular $\beta$ ensemble is given by
  \begin{equation}\label{2.0}
  {\mathcal P}_{N,\beta}(\theta_1,\dots,\theta_N) = {1 \over Z_{N,\beta}} \prod_{1 \le j < k \le N} |
  e^{i \theta_k} - e^{i \theta_j} |^\beta,
   \end{equation}
with normalisation
 \begin{equation}\label{2.0a}
 Z_{N,\beta} = (2 \pi)^N {\Gamma(1 + N \beta/2) \over (\Gamma(1 + \beta/2))^N}.
  \end{equation}
  The cases $\beta = 1,2$ and 4 correspond to the COE, CUE and CSE respectively.
  With $E_{N,\beta}(l;(\theta, \theta'))$ denoting the probability that there are exactly $l$
  eigenvalues in the interval $(\theta, \theta')$ of the circular $\beta$ ensemble, the corresponding
  generating function is defined by
   \begin{equation}\label{2.1} 
 \mathcal  E_{N,\beta}((\theta, \theta');\xi)  := \sum_{l=0}^N (1 - \xi)^l E_{N,\beta}(l;(\theta, \theta')).
  \end{equation} 
  
  Riser and Kanzieper \cite{RK23} have shown that the power spectrum statistic (\ref{1.1}) can,
  for the general $\beta > 0$ circular ensemble,
  be expressed in terms of the generating function (\ref{2.1}).
  
  \begin{prop}\label{P2.1} (\cite[Corollary 3.3]{RK23}) Let the power sum statistic for the circular $\beta$ ensemble
  be denoted $S_{N,\beta}(\omega)$. For $q=0,1$ define
  \begin{equation}\label{2.2}  
  I_{N,q}^{(\beta)}(z) := {N \over 2 \pi} \int_0^\pi \Big ( {\phi \over 2 \pi} \Big )^q   \mathcal E_{N,\beta}((0, \phi);1- z) \, d \phi. 
  \end{equation}  
  Setting $z = e^{i \omega}$ we have
    \begin{equation}\label{2.3}  
  S_{N,\beta}(\omega) = {2 z \over (1 - z)^2} \bigg ( - {\rm Re} \Big (  I_{N,0}^{(\beta)}(z)     - (1 - z^{-N}) I_{N,1}^{(\beta)}(z)  \Big ) 
  + {2 \over N} \Big ( {\rm Re} ( z^{-N/2}  I_{N,0}^{(\beta)}(z)) \Big )^2 \bigg ).
  \end{equation}
  \end{prop}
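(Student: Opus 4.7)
The plan is to recognize the Fourier double sum in (\ref{1.1}) as the variance of the single trigonometric functional $Y(z):=\sum_{l=1}^{N}z^{l}\theta_{l}$ at $z=e^{i\omega}$, and then to convert that variance into integrals over a single interval of $\mathcal E_{N,\beta}$. Reality of the eigen-angles gives $\overline{Y(z)}=Y(1/z)$ on the unit circle, so the Fourier sum equals $\langle Y(z)Y(1/z)\rangle-|\langle Y(z)\rangle|^{2}$. Writing $\mathcal N(0,\phi)$ for the counting function of eigen-angles in $(0,\phi)$ and using the order-statistic identity $\theta_{l}=\int_{0}^{2\pi}\mathbbm{1}_{\mathcal N(0,\phi)<l}\,d\phi$, a geometric sum under the integral yields
$$
Y(z)=\frac{z}{1-z}\,\mathcal I(z)-\frac{2\pi z^{N+1}}{1-z},\qquad \mathcal I(z):=\int_{0}^{2\pi}z^{\mathcal N(0,\phi)}\,d\phi,
$$
and the identification $\langle z^{\mathcal N(0,\phi)}\rangle=\mathcal E_{N,\beta}((0,\phi);1-z)$ makes $\langle\mathcal I(z)\rangle=:J_{0}(z)$ the $[0,2\pi]$ integral of the generating function.

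The heart of the proof is the evaluation of $\langle\mathcal I(z)\mathcal I(1/z)\rangle$. Rotation invariance of the circular $\beta$ ensemble forces the distribution of $\mathcal N(0,\phi)-\mathcal N(0,\phi')$ to depend only on $|\phi-\phi'|$, with a sign governed by which of $\phi,\phi'$ is larger. Splitting $[0,2\pi]^{2}$ accordingly and changing variable to $u=|\phi-\phi'|$ reduces the double integral to $\int_{0}^{2\pi}(2\pi-u)\bigl[\mathcal E_{N,\beta}((0,u);1-z)+\mathcal E_{N,\beta}((0,u);1-1/z)\bigr]du$. To fold this from $[0,2\pi]$ onto the $[0,\pi]$ range of the $I_{N,q}^{(\beta)}$, I would invoke the arc-complement identity $\mathcal E_{N,\beta}((0,u);1-z)=z^{N}\mathcal E_{N,\beta}((0,2\pi-u);1-1/z)$, itself a consequence of $E_{N,\beta}(l;(0,u))=E_{N,\beta}(N-l;(0,2\pi-u))$. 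The folded integrals carry weights $1$ and $u$ against $\mathcal E(\cdot;1-z)$ over $[0,\pi]$, so they reduce to $I_{N,0}^{(\beta)}(z)$ and $I_{N,1}^{(\beta)}(z)$, yielding
$$
\langle\mathcal I(z)\mathcal I(1/z)\rangle=\frac{8\pi^{2}}{N}\,\mathrm{Re}\bigl[I_{N,0}^{(\beta)}(z)-(1-z^{-N})I_{N,1}^{(\beta)}(z)\bigr].
$$
The same folding gives $J_{0}(z)=(2\pi/N)\bigl(I_{N,0}^{(\beta)}(z)+z^{N}I_{N,0}^{(\beta)}(1/z)\bigr)$, and extracting $z^{N/2}$ collapses $|J_{0}(z)|^{2}$ to $(16\pi^{2}/N^{2})\bigl(\mathrm{Re}(z^{-N/2}I_{N,0}^{(\beta)}(z))\bigr)^{2}$.

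Finally I would assemble the variance. Expanding $Y(z)Y(1/z)$ and $|\langle Y(z)\rangle|^{2}$ from the explicit formula for $Y(z)$, all cross terms linear in $\mathcal I$ or $J_{0}$ cancel pairwise, leaving
$$
\langle Y(z)Y(1/z)\rangle-|\langle Y(z)\rangle|^{2}=\frac{-z}{(1-z)^{2}}\bigl[\langle\mathcal I(z)\mathcal I(1/z)\rangle-|J_{0}(z)|^{2}\bigr].
$$
Substituting the two expressions just derived and dividing by $N\Delta_{N}^{2}=4\pi^{2}/N$ reproduces (\ref{2.3}) after rearranging the overall sign. The main obstacle I anticipate is the double reduction in the middle paragraph: the passage first from $[0,2\pi]^{2}$ to a single $[0,2\pi]$ integral via rotation invariance, and then from $[0,2\pi]$ to $[0,\pi]$ via arc-complement duality, requires careful sign and $z^{\pm N}$ bookkeeping, and is what ultimately creates both the combination $(1-z^{-N})I_{N,1}^{(\beta)}(z)$ and the $z^{-N/2}$ twist inside the squared real part in the statement.
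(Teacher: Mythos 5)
Your derivation is correct: the representation $Y(z)=\tfrac{z}{1-z}\,\mathcal I(z)-\tfrac{2\pi z^{N+1}}{1-z}$ with $\mathcal I(z)=\int_0^{2\pi}z^{\mathcal N(0,\phi)}d\phi$, the reduction of $\langle \mathcal I(z)\mathcal I(1/z)\rangle$ to $\tfrac{8\pi^2}{N}\,\mathrm{Re}\,[I_{N,0}^{(\beta)}(z)-(1-z^{-N})I_{N,1}^{(\beta)}(z)]$ via rotation invariance together with the complement identity $\mathcal E_{N,\beta}((0,u);1-z)=z^N\mathcal E_{N,\beta}((0,2\pi-u);1-1/z)$, and the collapse of $|\langle\mathcal I(z)\rangle|^2$ to $\tfrac{16\pi^2}{N^2}\bigl(\mathrm{Re}(z^{-N/2}I_{N,0}^{(\beta)}(z))\bigr)^2$ all check out, and they assemble with the prefactor $-z/(1-z)^2$ and the normalisation $N\Delta_N^2=4\pi^2/N$ to give (\ref{2.3}) exactly. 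Note that the paper supplies no proof of this proposition --- it is quoted from \cite{RK23} --- so your argument serves as a self-contained reconstruction, and it follows essentially the same route as the original Riser--Kanzieper derivation (counting-function representation of the ordered eigen-angles, the characteristic-function identification $\langle z^{\mathcal N(0,\phi)}\rangle=\mathcal E_{N,\beta}((0,\phi);1-z)$, and translation invariance on the circle).
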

  
  In the case $\beta = 2$, it has been shown in \cite{ROK20,RK23} how this can be used to compute $S_{N,\beta}(\omega)$. Below, we will revise
  the available methodologies, and furthermore show that computation based on (\ref{2.3}) is also possible for $\beta = 1$ and 4. But before doing so,
  we will show that $S_{N,\beta}(\omega)$ at $\omega = 0$ can be expressed in terms of Var$\, {\mathcal N}_{(0,\phi)}^{(\beta)}$, where ${\mathcal N}_{(0,\phi)}^{(\beta)}$
  denotes the random variable counting the number of eigenvalues in the interval $(0,\phi)$ of the circular $\beta$ ensemble (this is essentially the
  same quantity as introduced in the second sentence of the Introduction).
  
  \begin{cor}
  We have
    \begin{align}\label{2.4} 
    S_{N,\beta}(\omega) \Big |_{\omega = 0}  = {N \over 2 \pi} \int_0^\pi \Big ( {\rm Var} \,   {\mathcal N}_{(0,\phi)}^{(\beta)} \Big ) \, d \phi 
     \mathop{\sim}\limits_{N \to \infty} { N \over \beta \pi^2} \Big ( \log {N \over 2} + c_\beta + \cdots \Big ),
    \end{align}
    where terms not shown inside of the brackets of the second line are ${\rm o}(1)$. Here,
    with $\gamma$ denoting Euler's constant, $c_\beta$ takes on the values
     \begin{equation}\label{2.4b} 
     \log 2 + \gamma + 1 - {\pi^2 \over 8}, \quad \log 2 + \gamma + 1, \quad 2 \log 2 + \gamma +1 + {\pi^2 \over 8}
      \end{equation}
    in the cases  $\beta = 1,2,4$ respectively.
    \end{cor}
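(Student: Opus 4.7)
My plan is to split the corollary into its two assertions, the exact identity and the large-$N$ asymptotic, and to prove them in that order.

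For the exact identity I would bypass Proposition~\ref{P2.1}, where (\ref{2.3}) is formally singular at $\omega = 0$, and work directly from the definition~(\ref{1.1}). At $\omega = 0$ it reads
\[
S_{N,\beta}(0) \;=\; \frac{1}{N\Delta_N^2}\,\mathrm{Var}\!\Big(\sum_{l=1}^{N}\theta_l\Big) \;=\; \frac{N}{4\pi^2}\,\mathrm{Var}(T),\qquad T := \sum_{l=1}^N \theta_l.
\]
The layer-cake identity $\int_0^{2\pi}\mathcal N^{(\beta)}_{(0,\phi)}\, d\phi = 2\pi N - T$ reduces $\mathrm{Var}(T)$ to a double integral of covariances of the counting function. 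The rotational invariance of (\ref{2.0}), combined with the additive decomposition $\mathcal N_{(0,\phi')} = \mathcal N_{(0,\phi)} + \mathcal N_{(\phi,\phi')}$, produces the standard identity
\[
\mathrm{cov}\bigl(\mathcal N^{(\beta)}_{(0,\phi)},\, \mathcal N^{(\beta)}_{(0,\phi')}\bigr) \;=\; \tfrac{1}{2}\bigl[V(\phi) + V(\phi') - V(|\phi-\phi'|)\bigr],
\]
with $V(\psi) := \mathrm{Var}\,\mathcal N^{(\beta)}_{(0,\psi)}$ satisfying the complementary symmetry $V(\psi) = V(2\pi-\psi)$ (because $\mathcal N_{(0,\psi)} + \mathcal N_{(\psi,2\pi)} = N$). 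Applying Fubini and using this symmetry to show $\int_0^{2\pi}\!\!\int_0^{2\pi} V(|\phi-\phi'|)\,d\phi\,d\phi' = 2\pi\int_0^{2\pi} V(\phi)\,d\phi$ collapses the double integral to $2\pi \int_0^\pi V(\phi)\, d\phi$, producing the first equality of (\ref{2.4}).

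For the large-$N$ statement I would feed in the sharp large-$N$ asymptotic of $V_\beta(\phi)$ at fixed $\phi \in (0,2\pi)$, namely
\[
V_\beta(\phi) \;=\; \frac{2}{\beta\pi^2}\bigl[\log(2N\sin(\phi/2)) + \gamma + 1\bigr] \;+\; e_\beta \;+\; o(1),
\]
with $e_2 = 0$ (the classical Dyson--Mehta value for CUE) and non-zero $e_1$, $e_4$ encoding the Pfaffian contributions of COE and CSE. Using the Gauss log-sine identity $\int_0^{\pi/2}\log\sin u\, du = -(\pi/2)\log 2$ gives $\int_0^\pi \log(2N\sin(\phi/2))\, d\phi = \pi\log N$, so substitution in the first equality yields
\[
S_{N,\beta}(0) \;\sim\; \frac{N}{\beta\pi^2}\bigl[\log N + \gamma + 1\bigr] + \frac{Ne_\beta}{2} \;=\; \frac{N}{\beta\pi^2}\Big[\log(N/2) + \log 2 + \gamma + 1 + \tfrac{\beta\pi^2 e_\beta}{2}\Big].
\]
The claimed values of $c_\beta$ in (\ref{2.4b}) then correspond to $e_2 = 0$, $e_1 = -1/4$ (giving the $-\pi^2/8$ in $c_1$) and the corresponding $e_4$ that generates the $+\pi^2/8$ and the extra $\log 2$ in $c_4$ (the latter reflecting that $4N\sin(\phi/2)$, rather than $2N\sin(\phi/2)$, is the natural argument of the logarithm for CSE).

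The algebraic reduction of the double integral and the Gauss log-sine integration are routine; the genuine obstacle is establishing the sharp constants $e_1$ and $e_4$. These are most naturally obtained from strong Szeg\H{o}/Fisher--Hartwig asymptotics applied to the characteristic function of $\mathcal N^{(\beta)}_{(0,\phi)}$ in the Pfaffian ensembles (equivalently, from Dyson's cluster-function asymptotics together with the $\beta=1,4$ corrections to the bulk scaled two-point function). A secondary, but necessary, check is that the $o(1)$ remainder in the asymptotic of $V_\beta(\phi)$ is uniform in $\phi \in (0,\pi)$ (or at worst integrably controlled near the endpoint $\phi = 0$, where $V_\beta(\phi) \to 0$), so that the integration against $d\phi$ preserves the $o(1)$ error in the final statement.
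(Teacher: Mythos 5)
Your proposal is correct, and the first half takes a genuinely different route from the paper. For the exact identity the paper stays inside the Riser--Kanzieper framework: it expands the generating function $\mathcal E_{N,\beta}((0,\phi);1-e^{i\omega})$ to second order in $\omega$ (the expansion (\ref{2.4d})), substitutes into (\ref{2.3}), and observes that the $O(\omega^2)$ bracket cancels the $1/\omega^2$ prefactor, leaving the variance integral. You instead bypass (\ref{2.3}) entirely and evaluate the definition (\ref{1.1}) at $\omega=0$ via the representation $\int_0^{2\pi}\mathcal N^{(\beta)}_{(0,\phi)}\,d\phi = 2\pi N - \sum_l\theta_l$, the covariance identity $\mathrm{cov}(\mathcal N_{(0,\phi)},\mathcal N_{(0,\phi')})=\tfrac12[V(\phi)+V(\phi')-V(|\phi-\phi'|)]$, and the symmetry $V(\psi)=V(2\pi-\psi)$; I have checked that this reduction does collapse correctly to $\frac{N}{2\pi}\int_0^\pi V(\phi)\,d\phi$. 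Your route is more elementary (no generating-function machinery and no singular-limit manipulation at $z=1$) and makes the structural reason for the identity transparent; the paper's route has the advantage of reusing Proposition \ref{P2.1}, which is needed elsewhere anyway. For the asymptotic half the two arguments are essentially identical --- substitute the known large-$N$ number-variance asymptotics and integrate using $\int_0^{\pi/2}\log\sin u\,du=-\tfrac{\pi}{2}\log 2$ --- except that the paper simply quotes the formula (\ref{2.4f}) with the explicit constants $c_\beta$ from the cited literature, whereas you parametrise the $\beta=1,4$ corrections by constants $e_\beta$ and defer their sharp determination to Fisher--Hartwig/Szeg\H{o}-type asymptotics. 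Your stated values are consistent with (\ref{2.4b}) ($e_1=-1/4$, and $e_4$ absorbing both the extra $\log 2$ and the $+\pi^2/8$), and your remark about needing uniformity of the $o(1)$ remainder in $\phi$ is a legitimate point that the paper's proof also glosses over.
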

    
    \begin{proof}
   The random variable ${\mathcal N}_{(0,\phi)}^{(\beta)}$ takes on integer values $l$ in the range $0 \le l \le N$, and moreover
   ${\rm Pr}( {\mathcal N}_{(0,\phi)}^{(\beta)} = l) =  E_{N,\beta}(l;(0, \phi))$. With $\langle \cdot \rangle$ denoting averaging with respect
   to (\ref{2.0}), it follows from this latter formula that
    \begin{equation}\label{2.4c}  
   \langle   {\mathcal N}_{(0,\phi)}^{(\beta)} \rangle  = \sum_{l=0}^N l  E_{N,\beta}(l;(0, \phi)) = {N \phi \over 2 \pi},   \quad
     \langle   ({\mathcal N}_{(0,\phi)}^{(\beta)})^2  \rangle   = \sum_{l=0}^N l^2  E_{N,\beta}(l;(0, \phi)), 
      \end{equation}   
      where the second equality in the  first formula makes use of the fact that the circular $\beta$ ensemble is translationally invariant, and
      so has constant density ${N \over 2 \pi}$. The relevance of both formulas in (\ref{2.4c}) is seen from the small $\omega$ expansion
   \begin{align}\label{2.4d} 
   \mathcal E_{N,\beta}((0, \phi);1 - z) \Big |_{z = e^{i \omega}} & :=  \sum_{l=0}^N e^{i l \omega}  E_{N,\beta}(l;(0, \phi))  \nonumber \\
    & = 1 + i \omega   \langle   {\mathcal N}_{(0,\phi)}^{(\beta)} \rangle  - {\omega^2 \over 2}  \langle   ({\mathcal N}_{(0,\phi)}^{(\beta)})^2  \rangle + {\rm O}(\omega^3).
  \end{align} 
  Substituting (\ref{2.4d}) in (\ref{2.3}) shows that the leading terms inside the big brackets therein is ${\rm O}(\omega^2)$, while the leading term of the
  prefactor for small $\omega$ is proportional to $1/\omega^2$. Straightforward simplification, together with the fact that
    \begin{equation}\label{2.4e}  
   {\rm Var} \,   {\mathcal N}_{(0,\phi)}^{(\beta)} :=      \langle   ({\mathcal N}_{(0,\phi)}^{(\beta)})^2  \rangle  -  \langle   {\mathcal N}_{(0,\phi)}^{(\beta)}  \rangle^2 =
    \langle   ({\mathcal N}_{(0,\phi)}^{(\beta)})^2  \rangle  - \Big (  {N \phi \over 2 \pi}  \Big )^2,
    \end{equation}
    then leads to the equality in (\ref{2.4}).
    
    The large $N$ asymptotic formula in (\ref{2.4}) follows from the equality therein upon substituting the known large $N$ asymptotic
    formula for the integrand \cite[Eq.~(41)]{SLMS21a} (see \cite[\S 2.4 and 2.7]{Fo23} for context, as well as \cite{FF04}, \cite{FL20})
     \begin{equation}\label{2.4f}  
  {\rm Var} \,   {\mathcal N}_{(0,\phi)}^{(\beta)}    \mathop{\sim}\limits_{N \to \infty} { 2 \over \beta \pi^2} \Big ( \log N + \log \Big ( \sin {\phi \over 2} \Big ) + c_\beta \Big ).
    \end{equation}      
    \end{proof}

In fact for $\beta = 1,2$ and 4 it is possible to go beyond the previous large $N$ asymptotics and deduce exact finite $N$ results, 
from which the former follow and permit the easy calculation of further corrections to any order in large $N$.
\begin{prop}
The CUE power spectrum possesses the  value at $ \omega=0 $
\begin{equation}
	S_{N,2}(0) = \frac{N}{2\pi^2}\left[ \gamma + \psi(N) + N\psi^{\prime}(N) \right] ,
\label{CUE_zero-intercept}
\end{equation} 
where $ \psi(x), \psi^{\prime}(x) $ are the digamma function and its derivative \cite[\S 5.2]{DLMF} respectively.
This value for the COE power spectrum is
\begin{equation}
	S_{N,1}(0) = N\left[ -\tfrac{1}{8}+\frac{1}{\pi^2}\left( \gamma + \psi(N) + N\psi^{\prime}(N) + \tfrac{1}{4}\psi^{\prime}(\tfrac{1}{2}(N+1)) \right) \right] .
\label{COE_zero-intercept}
\end{equation} 
The corresponding value for the CSE power spectrum is
\begin{equation}
	S_{N,4}(0) = N\left[ \tfrac{1}{32}+\frac{1}{4\pi^2}\left( \gamma + \psi(2N) + 2N\psi^{\prime}(2N) - \tfrac{1}{4}\psi^{\prime}(N+\tfrac{1}{2}) \right) \right] .
\label{CSE_zero-intercept}
\end{equation} 
\end{prop}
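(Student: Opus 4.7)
The preceding corollary gives
\[
S_{N,\beta}(0) = \frac{N}{2\pi}\int_0^\pi {\rm Var}\,\mathcal{N}^{(\beta)}_{(0,\phi)}\,d\phi,
\]
so the task reduces to an exact evaluation of this integrated number variance for each $\beta\in\{1,2,4\}$. I would diagonalise the integrand in a Fourier basis on $[0,2\pi)$: the Fourier coefficients of $\mathbf{1}_{(0,\phi)}$ are $\hat f_\phi(k)=(1-e^{-ik\phi})/(2\pi ik)$ for $k\neq 0$ and $\phi/(2\pi)$ at $k=0$, so $\mathcal{N}^{(\beta)}_{(0,\phi)}=\sum_j\mathbf{1}_{(0,\phi)}(\theta_j)=N\phi/(2\pi)+\sum_{k\neq 0}\hat f_\phi(k)\,\mathrm{tr}(U^k)$. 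Rotational invariance of the p.d.f.\ \eqref{2.0} forces $\langle\mathrm{tr}(U^k)\rangle=0$ and $\langle\mathrm{tr}(U^k)\mathrm{tr}(U^j)\rangle=0$ unless $k+j=0$, which makes the covariance diagonal in this basis and yields
\[
{\rm Var}\,\mathcal{N}^{(\beta)}_{(0,\phi)}=\frac{2}{\pi^2}\sum_{k=1}^{\infty}\frac{\sin^2(k\phi/2)}{k^2}\,M^{(\beta)}_{k}(N),\qquad M^{(\beta)}_{k}(N):=\big\langle|\mathrm{tr}(U^k)|^2\big\rangle_{\beta,N}.
\]

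Since $\int_0^\pi\sin^2(k\phi/2)\,d\phi=\pi/2$ for every positive integer $k$, the $\phi$-integration collapses the statement to $S_{N,\beta}(0)=(N/2\pi^2)\sum_{k\ge 1}M^{(\beta)}_{k}(N)/k^2$. For the CUE the Diaconis--Shahshahani/Weyl identity $M^{(2)}_k(N)=\min(k,N)$ splits the series into $\sum_{k=1}^{N}1/k=H_N=\gamma+\psi(N+1)$ and $N\sum_{k>N}1/k^2=N\psi'(N+1)$; the elementary reductions $\psi(N+1)=\psi(N)+1/N$ and $\psi'(N+1)=\psi'(N)-1/N^2$ then produce \eqref{CUE_zero-intercept} in one line. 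For the COE and CSE I would plug in the corresponding exact finite-$N$ trace moments, obtainable from Jack polynomial / zonal spherical function expansions on the underlying symmetric spaces (or equivalently from the Pfaffian kernels for the respective correlation functions); these take the schematic form $M^{(\beta)}_k(N)=(2/\beta)\min(k,N)+\eta^{(\beta)}_k(N)$ with a $k$-parity-dependent correction $\eta^{(\beta)}_k$. After substitution the leading piece reproduces a $(2/\beta)$-rescaled CUE computation, while the odd-$k$ tail $\sum_{k>N,\,k\,\text{odd}}1/k^2$ rearranges by a Hurwitz-type identity into $\tfrac{1}{4}\psi'((N+1)/2)$ for the COE and into $-\tfrac{1}{4}\psi'(N+\tfrac12)$ for the CSE, while the low-$k$ portion of $\eta^{(\beta)}_k$ produces the additive constants $-N/8$ in \eqref{COE_zero-intercept} and $+N/32$ in \eqref{CSE_zero-intercept}.

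The main obstacle is the COE/CSE trace-moment input. Whereas $M^{(2)}_k(N)=\min(k,N)$ is a one-line character orthogonality, the $\beta=1,4$ analogues require nontrivial Jack polynomial / zonal spherical function bookkeeping at finite $N$, and the subsequent parity-sorted summation must be executed with enough care that precisely the half-integer arguments $(N+1)/2$ and $N+\tfrac12$ of $\psi'$ appear with their exact coefficients $\pm\tfrac{1}{4}$ and that the constants $-\tfrac{1}{8}N$ and $+\tfrac{1}{32}N$ come out correctly. Once those moment identities are in hand, the remaining reduction to \eqref{COE_zero-intercept} and \eqref{CSE_zero-intercept} is routine digamma algebra.
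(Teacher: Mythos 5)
Your reduction $S_{N,\beta}(0) = \frac{N}{2\pi^2}\sum_{k\ge 1} M_k^{(\beta)}(N)/k^2$ with $M_k^{(\beta)}(N) = \langle |\tr U^k|^2\rangle$ is correct (the Fourier diagonalisation of the number variance and the identity $\int_0^\pi \sin^2(k\phi/2)\,d\phi = \pi/2$ both check out), and for $\beta=2$ your argument is complete and genuinely different from the paper's: the Diaconis--Shahshahani/Rains identity $M_k^{(2)}(N)=\min(k,N)$ gives \eqref{CUE_zero-intercept} in a few lines of digamma algebra, whereas the paper instead extracts $\mathrm{Var}\,\mathcal N^{(2)}_{(0,\phi)}$ from the ${\rm O}(\xi^2)$ coefficient of the gap generating function \eqref{ECUEGap_Expansion}, computed via the discrete Painlev\'e recurrence for $r_n$. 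Your route is more elementary for this case, but it is tied to having exact trace moments in hand; the paper's route is chosen precisely because it extends to $\beta=1,4$ through the generating-function interrelations.

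For $\beta=1$ and $4$ there is a genuine gap: the entire content of \eqref{COE_zero-intercept} and \eqref{CSE_zero-intercept} resides in the exact finite-$N$ moments $M_k^{(1)}(N)$ and $M_k^{(4)}(N)$, and you supply only the schematic ansatz $(2/\beta)\min(k,N)+\eta_k^{(\beta)}(N)$ with an unspecified ``parity-dependent correction.'' This understates the difficulty. The finite-$N$ COE/CSE second trace moments are not a rescaled CUE answer plus a small parity term; already in the bulk-scaled regime the $\beta=1$ form factor is $2k-k\log(1+2k/N)+\cdots$, so $\eta_k^{(1)}$ is of the same order as the leading term and its exact finite-$N$ form (which must ultimately resum to $\tfrac14\psi'(\tfrac12(N+1))$ and the constant $-\tfrac18 N$) is as hard to derive as the proposition itself. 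The paper circumvents this entirely: it never computes trace moments, but instead uses the COE/CSE-to-orthogonal-group identities \eqref{2.7} and \eqref{2.7S} together with the Claeys et al.\ relations \eqref{UEtoEOC:a}--\eqref{UEtoEOC:c} to express the $\beta=1,4$ gap generating functions in terms of the CUE one plus the bi-orthogonal polynomial values $\Phi_n(\pm1)$, whose $\xi$-expansions are controlled by \eqref{1st-order_BOPS}. The resulting explicit finite sums in \eqref{evenECOEGap_Expansion}, \eqref{oddECOEGap_Expansion} and \eqref{ECSEGap_Expansion} integrate term by term to the stated digamma expressions. To complete your proof you would need to either import proven closed forms for $M_k^{(1)}(N)$ and $M_k^{(4)}(N)$ from the literature and carry out the parity-sorted resummation explicitly, or fall back on a mechanism like the paper's.
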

We will derive these results in the following subsections treating the different $ \beta $ cases separately.
    
\subsection{The case $\beta = 2$} \label{S2.2}  
  The eigenvalues of the CUE form a determinantal point process, meaning that the general $k$-point
  correlation function can be expressed as a $k$-dimensional determinant, with each element of the
  form $K_N(x_j,x_l)$ for some kernel function $K_N$ independent of $k$ \cite{Dy62a}.
  A minor generalisation of \cite[Eqns.~(111) and (112)]{Dy62a}, which makes essential use of the
  determinantal point process structure, gives that
     \begin{equation}\label{2.5}  
 \mathcal  E_{N,2}((0, \phi);1 - z) \Big |_{z = e^{i \omega}} = \det \bigg [ \delta_{p,q} - (1-e^{i \omega} ){\sin ((p-q) \phi/2) \over \pi  (p-q) } \bigg ]_{p,q=-N+{1/2},\dots, N - 1/2}.
  \end{equation}
  This can be viewed too as resulting from an application of Andr\'eief's integration formula \cite{Fo19}.
  Note that with the matrix in this expression having entries depending on the difference of the row and column
  labels only, i.e.~it is a Toeplitz matrix, one can equivalently take $p,q=0,\dots,N-1$.
  In the case $p=q$ the scaled sinc function herein is to interpreted as being equal to $\phi  / (2 \pi)$.
  In the computer algebra system Mathematica this is a convenient form for purposes of numerical evaluation
  due to the ability to work in high precision arithmetic; otherwise there would be ill-conditioning due to cancellation of
  fixed precision approximation of the entries for large $N$  \cite{St73}. According to Proposition \ref{P2.1}, (\ref{2.5}) is a
  factor of the integrands in which $\phi$ is integrated over $(0,\pi)$. Relevant then is its graphical form for specific $\omega$ and $N$;
  an example is given in Figure \ref{Fa1} which represents typical behaviour (as $N$ is further increased, the number of oscillations
  increases while their amplitude decreases, and similarly as $\omega$ increases).  
  
\begin{figure*}
\centering
\includegraphics[width=0.7\textwidth]{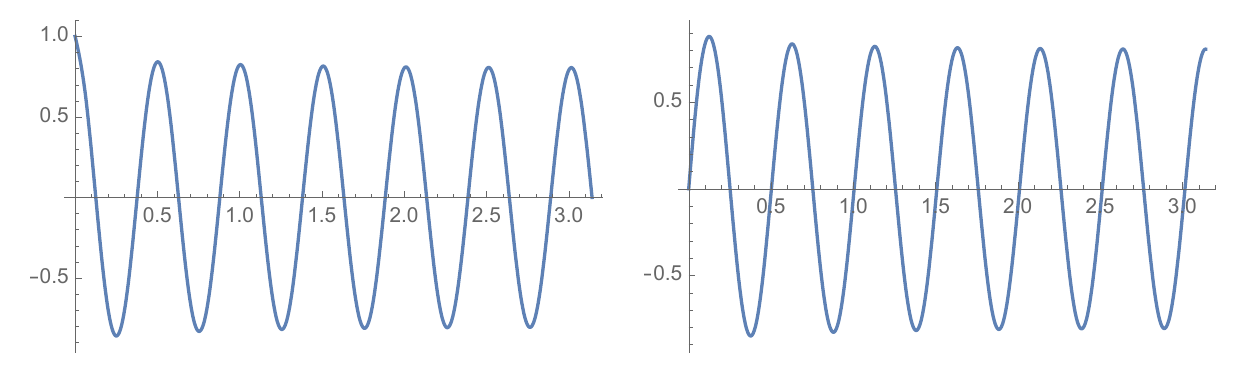}
\caption{Plot of the real part (left figure) and imaginary part (right figure) of (\ref{2.5}) as a function of $\phi \in [0,\pi]$,
with $N=100$ and $\omega = \pi/4$.}
\label{Fa1}
\end{figure*}

  Given the oscillatory nature of the integrand in (\ref{2.4}), as a numerical integration strategy specific to the use of Mathematica, for a given
  $z = e^{i \omega}$, use was made of (\ref{2.5}) to compute $\mathcal E_{N,2}((0, \phi);1 - z)$ to high accuracy at the points $\phi = \pi k/100$ for $k=0,\dots,100$
  and then an interpolation function formed as a continuous (accurate) approximation to this quantity for continuous $\phi$. After this step, the integrations are
  carried out using {\tt NIntegrate}. We display the results of the computation of $S_{N,2}(\omega)$, for $N=20,40$ using such a procedure, in Figure \ref{Fa2}.
  The values at $\omega = 0$ can be checked to be consistent with (\ref{2.4}) for $\beta = 2$ and with (\ref{CUE_zero-intercept}).
  
\begin{figure*}
\centering
\includegraphics[width=0.7\textwidth]{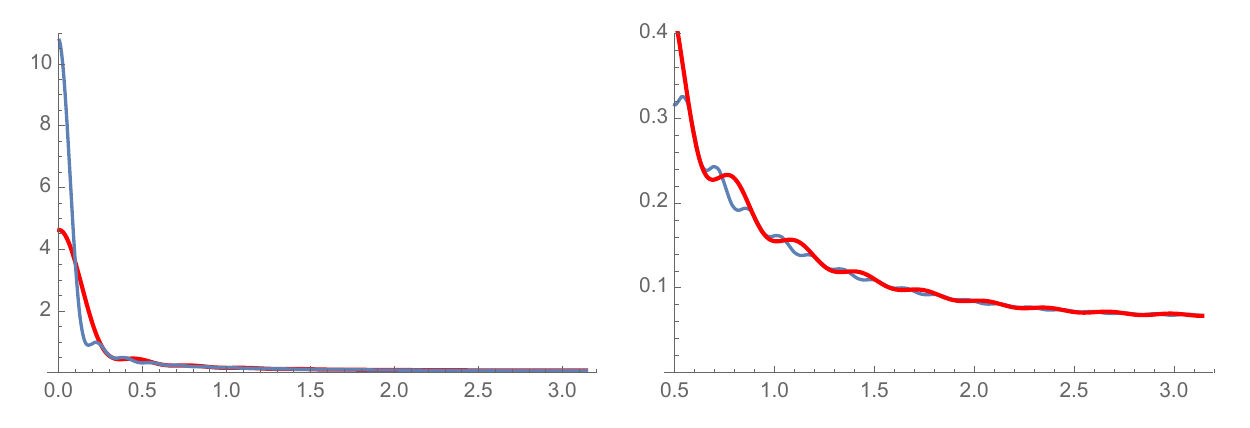}
\caption{[color online] Plot of $S_{N,2}(\omega)$, for $N=20$ (heavy red line) and $N=40$ over the full range of $\omega$, and over a restricted range, the latter  for purposes
of exhibiting evidence of (non-uniform) convergence to a limiting functional form as $N$ increases.}
\label{Fa2}
\end{figure*}

A fundamental result in random matrix theory, with its origins in the work of Gaudin \cite{Ga61}, gives that the matrix determinant form (\ref{2.5}) can
be re-expressed as a Fredholm determinant
   \begin{equation}\label{2.5x}  
 \mathcal E_{N,2}((0, \phi);1 -   e^{i \omega}) = \det \Big ( \mathbb I - (1-e^{i \omega}) \mathcal  K^{(0,\phi)}_N \Big ),
  \end{equation}
  where $\mathcal K_N^{(0,\phi)}$ is the integral operator on $(0,\phi)$ with kernel 
   \begin{equation}\label{2.5a}    
  K_{N,2}(\theta,\theta') = {1 \over 2 \pi} {\sin {1 \over 2} N (\theta - \theta') \over \sin {1 \over 2} (\theta - \theta')},
  \end{equation}
  while $\mathbb I$ is the identity operator.
  In relation to this, one recalls that  the Fredholm determinant can be defined in terms of a product over  the eigenvalues
  $\{ \lambda_j \}_{j=1}^N$ of the finite rank integral operator $\mathcal K_N^{(0,\phi)}$, 
  \begin{equation}\label{2.5b} 
    \mathcal E_{N,2}((0, \phi);1 - e^{i \omega}) =  \prod_{l=1}^N \Big ( 1 - (1 - e^{i \omega}) \lambda_l \Big ).
    \end{equation}
    Since the work of Bornemann \cite{Bo08,Bo10} it has been known that there is an efficient and accurate
    computational scheme --- a Nystr\"om-type quadrature approximation --- for the  evaluation  of Fredholm determinants
    with analytic kernels as in (\ref{2.5a}).
    This then provides an alternative to working with
    the $N \times N$ determinant expression in (\ref{2.5}); for more on this see Section \ref{S3}.
    
  In the scaled large $N$
    limit of (\ref{2.5x}) for which $\phi$ is replaced by $2 \pi s/N$ and the limit $N \to \infty$ then taken, 
    the Fredholm expression on the right hand side remains valid but 
    with $\mathcal K_N^{(0,\phi)}$ therein replaced by the integral $\mathcal K_\infty^{(0,s)}$ with kernel
     \begin{equation}\label{2.5c}   
   K_{\infty,2}(x,y)   = {1 \over \pi} {\sin \pi (x - y) \over x - y} = {\rm sinc} \, \pi (x - y);
  \end{equation} 
  see e.g.~\cite[Ch.~9]{Fo10}.  
  In keeping with foundational work of the Kyoto school relating this limiting Fredholm determinant to the theory of
  integrable systems \cite{JMMS80}, there is a so-called $\tau$-function expression for (\ref{2.5})
  \cite{FW04,Fo10,RK23}
   \begin{equation}\label{2.5d} 
 \mathcal   E_{N,2}((0, \phi);\xi)=  \exp \bigg (- \int_{\cot \phi/2}^\infty u_N(t;\xi) \, {dt \over 1 + t^2} \bigg ).
  \end{equation} 
  Here $u_N(t;\xi)$ is a particular $\sigma$-form Painlev\'e VI transcendent, characterised as satisfying the corresponding PVI $\sigma$-form 
  nonlinear differential  equation
  \begin{equation}\label{2.5e}  
  \Big ( (1 + t^2) u_N'' \Big )^2      + 4 u_N' (u_N - t u_N')^2 + 4 (u_N')^2 (u_N' + N^2) = 0
  \end{equation}
 subject to the $t \to \infty$ boundary condition\footnote{Here we have corrected the reporting in earlier literature.}
   \begin{equation}\label{2.5f}   
   u_N(t;\xi) =  {\xi \over  \pi} N + {\xi^2 \over  \pi^2} N^2 t^{-1} + {\xi^3 N^3 \over \pi^3}  t^{-2} +
   {\rm O}(t^{-3}).
   \end{equation} 
   
First noted in \cite{FW05}, and applied as a practical computational scheme for $S_N(\omega)$ in \cite{ROK20}, integrable systems theory can
also be used to compute the generating function $\mathcal E_{N,2}$ by recurrence. 
For this consider the particular, equivalent to discrete PV equation, 
\begin{equation}\label{2.5g} 
   2 r_n r_{n-1} + 2 \cos \phi = {1 - r_n^2 \over r_n} \Big ( (n+1) r_{n+1} +
   (n-1) r_{n-1} \Big ) -      {1 - r_{n-1}^2 \over r_{n-1}} \Big ( nr_n + (n-2) r_{n-2} \Big ),
\end{equation}
subject to the initial values
\begin{equation}\label{2.5h}  
     r_{-1} = 0, \quad r_0 = 1, \quad r_1 = \xi {\sin \phi \over \pi - \xi \phi}.
\end{equation}
One then has that
\begin{equation}\label{2.6}
    \log \mathcal E_{N+1,2}((0,2 \phi);\xi) = \sum_{j=1}^N ( N + 1 - j) \log (1 - r_j^2) + (N+1) \log \Big (1 - {\xi \over \pi} \phi \Big ).
\end{equation}
Also relevant in this setting are the specialised system of (bi)-orthogonal polynomials $ \Phi_{n}(z=\pm 1) $, which satisfy the recurrence relations
\begin{align}
	\frac{1}{r_{n+1}}\Phi_{n+1}(1) & = \left[ \frac{1}{r_{n+1}}+\frac{1}{r_{n}} \right]\Phi_{n}(1) + \left[ r_{n}-\frac{1}{r_{n}} \right]\Phi_{n-1}(1) ,
\nonumber \\
	\frac{1}{r_{n+1}}\Phi_{n+1}(-1) & = \left[ -\frac{1}{r_{n+1}}+\frac{1}{r_{n}} \right]\Phi_{n}(-1) + \left[ -r_{n}+\frac{1}{r_{n}} \right]\Phi_{n-1}(-1) ,
\label{Phi-_recur}	
\end{align}
with initial conditions $ \Phi_{-1}(\pm 1) = 0 $, $ \Phi_{0}(\pm 1) = 1 $.
These polynomials $ \Phi_{n}(z) $ are orthogonal with respect to the weight $ w(e^{i\theta}) = 1-\xi\chi_{(-\phi,\phi)}(\theta) $ and monic.
 
 Relevant to the results (\ref{CUE_zero-intercept})--(\ref{CSE_zero-intercept}) are expansion of all these quantities
 about $\xi=0$ (equivalently $\omega=0$) for fixed $n,N$.
\begin{lemma}
Let the $l$-th Taylor coefficient of $ r_{n} $ be denoted $ R_{n,l} $, $r_{n} = \delta_{n,0}+R_{n,1}\xi + {\rm O}(\xi^2) $.
Then for $ 0 < \phi < \pi $ we have  
\begin{equation}
	R_{n,1} = \frac{\sin n\phi}{n\pi}, \; n\geq 1,\quad R_{0,1} = 0.
\label{r_Expansion}
\end{equation}
Let the $l$-th Taylor coefficient of $ \Phi_{n}(\pm 1) $ be denoted by $ \alpha_{n,l}, \beta_{n,l} $ respectively, 
$ \Phi_{n}(1) = 1+\alpha_{n,1}\xi + {\rm O}(\xi^2) $ and $ \Phi_{n}(-1) = (-1)^{n}+\beta_{n,1}\xi + {\rm O}(\xi^2) $.
Then for $ 0 < \phi < \pi $ we have 
\begin{equation}
	\alpha_{n,1} = \frac{1}{\pi} \sum_{j=1}^{n}\frac{\sin j\phi}{j} ,
\quad
	\beta_{n,1} = \frac{(-1)^{n}}{\pi} \sum_{j=1}^{n}(-1)^{j}\frac{\sin j\phi}{j} .
\label{1st-order_BOPS}
\end{equation}
The first two terms in the expansion of $ {\mathcal{E}}_{N,2}((0,2\phi);\xi) $ are given by
\begin{equation}
	\mathcal{E}_{N,2}((0,2\phi);\xi) = 1 - \frac{N\phi}{\pi}\xi
	+ \left[
		\frac{N(N-1)}{2\pi^2}\phi^2
		-\frac{1}{\pi^2}\sum_{j=1}^{N-1}(N-j)\frac{\sin^2(j\phi)}{j^2}
	\right] \xi^2	
	+ {\rm O}(\xi^3) .
\label{ECUEGap_Expansion}
\end{equation}
\end{lemma}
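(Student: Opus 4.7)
The plan is to prove the three parts in sequence, substituting a Taylor ansatz in $\xi$ into the recurrence \eqref{2.5g}, the coupled recurrences \eqref{Phi-_recur}, and the factorization \eqref{2.6}, respectively, and matching coefficients at the orders needed. For \eqref{r_Expansion}, write $r_n = R_{n,1}\xi + O(\xi^2)$ for $n \geq 1$ (with $r_0 = 1$ and $r_{-1} = 0$ exactly) and extract the $\xi^0$ balance in \eqref{2.5g}. From \eqref{2.5h} one reads off $R_{0,1} = 0$ immediately, and Taylor expansion of $r_1 = \xi\sin\phi/(\pi - \xi\phi)$ gives $R_{1,1} = \sin\phi/\pi$. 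For $n \geq 2$ the product $2r_n r_{n-1}$ on the left is $O(\xi^2)$ and drops out, while $(1-r_n^2)/r_n$ reduces to $1/r_n$; after clearing denominators and inserting the ansatz $R_{j,1} = \sin(j\phi)/(j\pi)$, two applications of $\sin(k+1)\phi + \sin(k-1)\phi = 2\cos\phi\,\sin k\phi$ reduce the remaining relation to a tautology, and induction on $n$ closes the argument. The low cases $n = 2, 3$ must be checked by direct computation because the coefficient $n-2$ and the boundary value $r_0 = 1$ interfere with the leading-order scaling.

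For \eqref{1st-order_BOPS}, multiply \eqref{Phi-_recur} through by $r_{n+1}$ and substitute $\Phi_n(1) = 1 + \alpha_{n,1}\xi + O(\xi^2)$. The $\xi^0$ balance holds trivially. At order $\xi^1$ the $O(\xi)$ correction to $r_{n+1}/r_n$ --- which a priori would involve the unknown second-order Taylor coefficients $R_{n,2}$ --- multiplies $\Phi_n(1)|_{\xi=0} = 1$ and $\Phi_{n-1}(1)|_{\xi=0} = 1$ with opposite signs (once from the bracket $[1/r_{n+1}+1/r_n]$, once from $r_{n+1}[r_n - 1/r_n]$), producing a clean cancellation. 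What remains is the linear difference equation $\alpha_{n+1,1} - \alpha_{n,1} = (R_{n+1,1}/R_{n,1})(\alpha_{n,1} - \alpha_{n-1,1})$, which, starting from $\alpha_{0,1} = 0$ and $\alpha_{1,1} = R_{1,1}$ (the latter by direct expansion of $\Phi_1(1) = 1 + r_1$), telescopes under induction to $\alpha_{n,1} = \sum_{j=1}^n R_{j,1}$. An entirely parallel computation for $\Phi_n(-1) = (-1)^n + \beta_{n,1}\xi + O(\xi^2)$ tracks the alternating sign pattern of \eqref{Phi-_recur} to give $\beta_{n,1}$.

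For \eqref{ECUEGap_Expansion}, relabel $N+1$ as $N$ in \eqref{2.6} to obtain
\[
\log\mathcal{E}_{N,2}((0,2\phi);\xi) = \sum_{j=1}^{N-1}(N-j)\log(1 - r_j^2) + N\log\!\Bigl(1 - \frac{\xi\phi}{\pi}\Bigr).
\]
Since $r_j = O(\xi)$ for $j \geq 1$, $\log(1 - r_j^2) = -R_{j,1}^2\xi^2 + O(\xi^3)$; the final term contributes $-N\xi\phi/\pi - \tfrac{1}{2}N\xi^2\phi^2/\pi^2 + O(\xi^3)$. Calling the resulting sum $L$ and using $\exp L = 1 + L + \tfrac{1}{2}L^2 + O(\xi^3)$, the term $\tfrac{1}{2}L^2 = \tfrac{1}{2}N^2\xi^2\phi^2/\pi^2 + O(\xi^3)$ combines with the quadratic part of $L$ to yield the prefactor $N(N-1)\phi^2/(2\pi^2)$, while the sum of $\log(1 - r_j^2)$ terms supplies the remaining $-(1/\pi^2)\sum_{j=1}^{N-1}(N-j)\sin^2(j\phi)/j^2$ contribution.

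The main obstacle I anticipate is securing the cancellation described in the second paragraph: without it, the first-order coefficients $\alpha_{n,1}$ would couple to the presently unknown $R_{n,2}$, destroying the telescoping. A careful tracking of which bracket in \eqref{Phi-_recur} contributes each term in the $\xi^1$ coefficient, and of the analogous sign pattern in the $\Phi_n(-1)$ calculation, is the delicate combinatorial heart of the argument; everything else is either Taylor expansion or trigonometric bookkeeping.
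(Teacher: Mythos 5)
Your proof is correct and follows the route the paper implicitly intends (the lemma is stated there without an explicit proof): Taylor expansion in $\xi$ of the recurrences (\ref{2.5g}) and (\ref{Phi-_recur}) and of the product formula (\ref{2.6}), with the key cancellation of the unknown second-order coefficients $R_{n,2}$ occurring exactly as you describe, both for $\Phi_n(1)$ (opposite-signed brackets against equal values $+1$) and for $\Phi_n(-1)$ (equal-signed brackets against the alternating values $(-1)^n$, $(-1)^{n-1}$). The only slip is in identifying the exceptional low indices in the first part: the cases requiring separate treatment of (\ref{2.5g}) are $n=1$ (where $1-r_0^2=0$ kills the second term and the order-$\xi^0$ balance determines $R_{2,1}=\sin\phi\cos\phi/\pi$) and $n=2$ (where the factor $n-2$ annihilates the $r_0=1$ contribution), not $n=2,3$; for $n\geq 3$ all of $r_{n-2},\dots,r_{n+1}$ are already ${\rm O}(\xi)$ and the generic argument applies.
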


According to (\ref{2.4d}) and (\ref{2.4e}), ${\rm Var} \, \mathcal N_{(0,\phi)}^{(\beta)} |_{\beta = 2}$ can be read off from
 \eqref{ECUEGap_Expansion}. Integrating according to the first equality in (\ref{2.4}) gives  \eqref{CUE_zero-intercept}.

\begin{remark}\label{R1a}
   The determinant expression (\ref{2.5}) implies that the coefficients of $z^k$ $(k=0,\dots,N)$ in the generating function $\mathcal E_{N,2}((0,\phi);1-z)$
   are a linear combination of monomials $\phi^p$ $(p=0,\dots,N$) times complex exponentials from $\{e^{iq\phi/2}\}_{q=-N^*}^{N^*}$ for some $N^*>0$.
   This structure allows the integrals in (\ref{2.2}) to be evaluated in closed form, and thus an exact evaluation of $S_{N,2}(\omega)$ carried out, at least
   for small $N$. For example, we can calculate
\begin{multline}\label{2.6h}
    S_{N,2}(\omega) \Big |_{N=3} = { 3(-81 - 72 \pi^2 + 16 \pi^4) \over 128 \pi^4} + { 3(-105 - 60 \pi^2 + 16 \pi^4) \over 160 \pi^4} \cos \omega \\+
    { 3(825 - 120 \pi^2 + 16 \pi^4) \over 640 \pi^4} \cos 2 \omega.
\end{multline}   
\end{remark}
   
\subsection{The case $\beta = 1$}\label{S2.3}
   In distinction to the CUE, the eigenvalues of the COE form a Pfaffian point process. While still structured, this implies a more
   complicated functional form for the analogue of (\ref{2.5}), due to this formalism giving a double integral formula for the
   elements of the Pfaffian; see e.g.~\cite[Prop.~6.3.4]{Fo10}. The recent study \cite{FK24} relating to computing conditioned gap
   probabilities for the Gaussian orthogonal ensemble --- this too being a Pfaffian point process --- shows this extra difficulty not to
   be insurmountable if one's aim is computation of the generating function $\mathcal E_{N,1}$ for fixed $N$. While this is part of our aim, we want
   to also be able to compute the large $N$ limit, where it is (analogues of) the forms (\ref{2.5a}) and/ or  (\ref{2.5d}) which are needed.
   Fortunately, due to certain inter-relations between conditioned
   gap probabilities in the COE, and those for the  determinantal point process corresponding to the
   eigenvalues of the orthogonal groups $O^+(N+1)$ and $O^-(N+1)$
    \cite{FR01,BF15,BFM17}, such functional forms are available.
    
\begin{prop}\label{P2.2} (\cite[Eq.~(5.10)]{BFM17})     
Using superscripts rather that subscripts to indicate the random matrix ensemble that the conditioned gap probability
generating function is referring to, we have that
  \begin{equation}\label{2.7} 
\mathcal  E^{{\rm COE}_N}((0,\phi);\xi) = \frac{1 - \xi}{ 2 - \xi}\mathcal E^{O^\nu(N+1)}((0,\phi/2);\xi(2 - \xi)) + \frac{1}{2 - \xi}\mathcal E^{O^{-\nu}(N+1)}((0,\phi/2);\xi(2 - \xi)) ,
  \end{equation} 
  where $\nu = (-)^N$.
  \end{prop}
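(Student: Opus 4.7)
The plan is to derive (\ref{2.7}) by reducing it to the finite-$N$ superposition identity of Forrester and Rains \cite{FR01} (see also \cite{BF15}), which expresses the joint eigenvalue density of ${\rm COE}_N$ as a weighted sum of the eigenvalue densities of the two connected components $O^\nu(N+1)$ and $O^{-\nu}(N+1)$ of $O(N+1)$ after the angle-halving substitution $\psi_j = \theta_j/2$. The underlying density identity is itself proved by a Vandermonde manipulation that rewrites $\prod_{j<k}|\sin(\psi_j-\psi_k)|$ (the ${\rm COE}_N$ Boltzmann factor in the $\psi$-variables) as a combination of the Jacobi-type products $\prod_{j<k}|\cos\psi_j - \cos\psi_k|$, weighted by boundary factors in $\sin(\psi_j/2)$ and $\cos(\psi_j/2)$ that are characteristic of the four orthogonal-group densities.

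First I would perform the change of variables $\psi_j = \theta_j/2$ in (\ref{2.1}), so that $\chi_{(0,\phi)}(\theta_j) = \chi_{(0,\phi/2)}(\psi_j)$ and the integration domain becomes $(0,\pi)^N$. Substituting the density decomposition into the resulting average would yield an expression of the schematic form
\begin{equation*}
\mathcal E^{{\rm COE}_N}((0,\phi);\xi) = A(\xi)\sum_{m} (1-\xi)^{2m} E^{O^\nu(N+1)}(m;(0,\phi/2)) + B(\xi)\sum_{m} (1-\xi)^{2m+1} E^{O^{-\nu}(N+1)}(m;(0,\phi/2)),
\end{equation*}
in which the exponents $2m$ and $2m+1$ encode the parity of the total COE eigen-count in $(0,\phi)$ when the $O^\nu$ or $O^{-\nu}$ component is sampled. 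The elementary identities $(1-\xi)^{2m} = (1-\xi(2-\xi))^m$ and $(1-\xi)^{2m+1} = (1-\xi)(1-\xi(2-\xi))^m$ then collapse each sum into $\mathcal E^{O^{\pm\nu}(N+1)}((0,\phi/2); \xi(2-\xi))$ multiplied by a $\xi$-dependent prefactor, and after tracking the normalization constants $A(\xi), B(\xi)$ inherited from the density decomposition, these prefactors should reduce to $(1-\xi)/(2-\xi)$ and $1/(2-\xi)$ as required.

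The principal obstacle lies in the parity bookkeeping: one must verify precisely which component $O^{\pm\nu}(N+1)$ contributes the even-count term and which the odd-count term, and how this correspondence depends on $N \bmod 2$ through $\nu = (-)^N$. The four orthogonal-group cosets $SO(2n)$, $O^-(2n)$, $SO(2n+1)$, $O^-(2n+1)$ differ both in the number of free eigenangles in $(0,\pi)$ and in the presence of fixed eigenvalues at $\pm 1$, so the alignment between $O^{\pm\nu}$ eigen-counts in $(0,\phi/2)$ and COE eigen-counts in $(0,\phi)$ carries a nontrivial dependence on the parity of $N$ that must be handled consistently with the boundary factors in the Rains density identity. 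Once this identification is verified, as carried out in \cite{BFM17} to which (\ref{2.7}) is attributed, the algebraic rearrangement sketched above yields the claim directly.
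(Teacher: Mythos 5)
The paper itself offers no proof of Proposition \ref{P2.2}: it is quoted directly from \cite[Eq.~(5.10)]{BFM17}, so your attempt can only be measured against the derivation given there, which rests on the Forrester--Rains inter-relations \cite{FR01} that you correctly identify as the source. Your closing algebraic step --- resumming via $(1-\xi)^{2m}=(1-\xi(2-\xi))^{m}$ and $(1-\xi)^{2m+1}=(1-\xi)(1-\xi(2-\xi))^{m}$ --- is exactly right, and you are also right that the parity bookkeeping in $\nu=(-)^N$ is the delicate point.

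The gap is in the mechanism you propose for reaching your intermediate schematic formula. You assert that the ${\rm COE}_N$ joint density is a weighted sum (mixture) of the densities of $O^{\nu}(N+1)$ and $O^{-\nu}(N+1)$ after angle halving. This cannot be the right structural relation: the two orthogonal components have of order $N/2$ free eigenangles in $(0,\pi)$ each, not $N$, so the densities do not live on the same space; and a genuine mixture would produce an identity of the form $A(\xi)\,\mathcal E^{O^{\nu}}(\cdot;\xi)+B(\xi)\,\mathcal E^{O^{-\nu}}(\cdot;\xi)$ with the \emph{same} argument $\xi$ --- which is precisely the structure of the CSE identity (\ref{2.7S}), not of (\ref{2.7}). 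The substitution $\xi\mapsto\xi(2-\xi)$, equivalently $1-\xi\mapsto(1-\xi)^2$, is the signature of a \emph{decimation}: each eigenvalue of $O^{\pm\nu}(N+1)$ in $(0,\phi/2)$ must account for two ${\rm COE}_N$ eigenvalues in $(0,\phi)$. The input actually required is the Forrester--Rains statement that every second eigenangle of ${\rm COE}_N$, halved, is distributed as the spectrum of $O^{\pm}(N+1)$ (the sign determined by where one starts counting and by the parity of $N$), which at the level of counting probabilities reads $E^{O^{-\nu}(N+1)}(m;(0,\phi/2))=E^{{\rm COE}_N}(2m;(0,\phi))+E^{{\rm COE}_N}(2m-1;(0,\phi))$ and $E^{O^{\nu}(N+1)}(m;(0,\phi/2))=E^{{\rm COE}_N}(2m;(0,\phi))+E^{{\rm COE}_N}(2m+1;(0,\phi))$. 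Multiplying (\ref{2.7}) through by $2-\xi=1+(1-\xi)$ and comparing coefficients of powers of $1-\xi$ shows that these two count relations are exactly equivalent to the proposition. Your sketch never supplies this decimation input, and the density decomposition you invoke in its place cannot generate the exponents $2m$ and $2m+1$ on which the rest of your argument depends.
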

    
  The joint eigenvalue probability density function\footnote{This is with respect to the eigenvalues with angles strictly between $0$ and $\pi$. There can
  also be fixed eigenvalues at $\pm 1$, and furthermore the eigenvalues come in complex conjugate pairs.}
    for the orthogonal groups $O^+(N)$ and $O^-(N)$ depends on the parity of $N$; see e.g.~\cite[\S 2.6]{Fo10}.
  Specifically, these functional forms, for $O^-(2N+2), O^+(2N), O^-(2N+1), O^+(2N+1)$, are proportional to
  \begin{equation}\label{2.7a}
  \prod_{l=1}^N (1 + \cos \theta_l)^{\lambda_1}    (1 - \cos \theta_l)^{\lambda_2} \prod_{1 \le j < k \le N}(\cos \theta_k - \cos \theta_j)^2, \quad 0 < \theta_l < \pi,
    \end{equation} 
    for $(\lambda_1, \lambda_2) = (1,1),(0,0),(1,0),(0,1)$ respectively.
   Consequently, there are four determinant formulas analogous to (\ref{2.5}) of relevance to (\ref{2.7}). These can be deduced from
   (easy to derive --- see \cite[Exercises 5.5 q.6]{Fo10}) ``Hankel $+$ Toeplitz'' formulas for averages over the orthogonal group eigenvalue probability density function.
   
   \begin{lemma}\label{L1} (\cite{BR01+})
   Suppose $a(e^{i \theta}) = a(e^{-i \theta})$ and put $a_j = {1 \over 2 \pi} \int_{-\pi}^\pi a(e^{i \theta}) e^{-i j \theta} \, d \theta$.   We have
   \begin{eqnarray*}
 \det [ a_{j-k} + a_{j+k-1} ]_{j,k=1,\dots,N}
 & =  \Big \langle \prod_{j=1}^{N}a(e^{i \theta_j})
\Big \rangle_{{O^-(2N+1)}},  \\
\det [ a_{j-k} - a_{j+k-1} ]_{j,k=1,\dots,N}
 & =  \Big \langle \prod_{j=1}^{N}a(e^{i \theta_j})
\Big \rangle_{{O^+(2N+1)}}, \nonumber \\
{1 \over 2} \det [ a_{j-k} + a_{j+k-2} ]_{j,k=1,\dots,N}  & = 
\Big \langle \prod_{j=1}^{N} a(e^{i \theta_j})
\Big \rangle_{{O^+(2N)}}, \\
\det [ a_{j-k} - a_{j+k} ]_{j,k=1,\dots,N}  & = 
\Big \langle \prod_{j=1}^{N} a(e^{i \theta_j})
\Big \rangle_{{O^-(2N+2)}}.
\end{eqnarray*}
   \end{lemma}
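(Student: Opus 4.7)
The plan is to combine the eigenvalue density \eqref{2.7a} with Andr\'eief's identity, after rewriting the density as a squared determinant in a trigonometric basis chosen so that the Jacobi prefactors $(1+\cos\theta)^{\lambda_1}(1-\cos\theta)^{\lambda_2}$ are absorbed into the basis functions; this is in the spirit of \cite{BR01+}.

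First I would rewrite
\[
\prod_{l=1}^{N}(1+\cos\theta_l)^{\lambda_1}(1-\cos\theta_l)^{\lambda_2}\prod_{1\le j<k\le N}(\cos\theta_k-\cos\theta_j)^2 = c_N\bigl(\det[\phi_l(\theta_j)]_{j,l=1}^{N}\bigr)^2,
\]
for a trigonometric basis $\{\phi_l\}_{l=1}^{N}$ and a constant $c_N$ depending on $(\lambda_1,\lambda_2)$. The natural choices are $\phi_l(\theta)=\cos((l-1)\theta)$ for $(0,0)$, $\phi_l(\theta)=\sin(l\theta)$ for $(1,1)$, $\phi_l(\theta)=\cos((l-\tfrac12)\theta)$ for $(1,0)$, and $\phi_l(\theta)=\sin((l-\tfrac12)\theta)$ for $(0,1)$. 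These follow from the Chebyshev identities $\sin\theta\,U_{l-1}(\cos\theta)=\sin(l\theta)$, $\cos(\theta/2)\,V_{l-1}(\cos\theta)=\cos((l-\tfrac12)\theta)$ and $\sin(\theta/2)\,W_{l-1}(\cos\theta)=\sin((l-\tfrac12)\theta)$ of the second, third and fourth kinds, together with triangular row operations converting the monomial basis $\{\cos^{l-1}\theta\}_{l=1}^{N}$ from the squared Vandermonde into $\{\phi_l\}$.

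With the factorization in hand, Andr\'eief's identity yields
\[
\Bigl\langle\prod_{j=1}^{N} a(e^{i\theta_j})\Bigr\rangle_{G} = \frac{N!\,c_N}{Z_G}\det\Bigl[\int_0^{\pi} a(e^{i\theta})\phi_j(\theta)\phi_k(\theta)\,d\theta\Bigr]_{j,k=1}^{N}.
\]
The symmetry hypothesis $a_j=a_{-j}$ gives $\int_0^{\pi} a(e^{i\theta})\cos(m\theta)\,d\theta=\pi a_m$ for all $m\in\Z$, and combined with the product-to-sum formulas $2\cos A\cos B=\cos(A-B)+\cos(A+B)$ and $2\sin A\sin B=\cos(A-B)-\cos(A+B)$, each matrix entry collapses to $\tfrac{\pi}{2}(a_{j-k}\pm a_{j+k+s})$ with sign and shift $s\in\{-2,-1,0\}$ matching precisely the four Toeplitz-plus-Hankel patterns in the statement.

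Finally, the overall prefactor $N!\,c_N(\pi/2)^{N}/Z_G$ is pinned down by specializing $a\equiv 1$, in which case $a_0=1$ and $a_m=0$ for $m\ne 0$: the Hankel correction then contributes only at position $(1,1)$ in the $O^+(2N)$ case (making the internal determinant equal to $2$) and vanishes entirely in the other three cases (where $j+k\ge 2$ exceeds the allowed shift, making the internal determinant equal to $1$). This reproduces the $\tfrac12$ prefactor on the $O^+(2N)$ line and forces the prefactor to be $1$ on the other three lines. The main obstacle is the bookkeeping in the factorization step: ensuring that the half-angle factors $\cos(\theta/2)^{\lambda_1}\sin(\theta/2)^{\lambda_2}$ thread cleanly through the triangular row operations into a basis with genuinely trigonometric diagonal leading term. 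I would sidestep this by asking only for the factorization up to the unspecified constant $c_N$, letting the $a\equiv 1$ normalization check at the end determine the overall constant unambiguously.
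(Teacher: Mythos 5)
Your proposal is correct, and it is essentially the derivation the paper itself points to: the lemma is stated without proof, with a citation to Baik--Rains and to \cite[Exercises 5.5 q.6]{Fo10}, where exactly this route is taken --- absorb the Jacobi prefactors into a trigonometric (Chebyshev first/second/third/fourth kind) basis so the squared Vandermonde becomes $(\det[\phi_l(\theta_j)])^2$, apply Andr\'eief, and use product-to-sum formulas to obtain the four Toeplitz-plus-Hankel structures. Your normalization check at $a\equiv 1$, which isolates the extra factor $\tfrac12$ in the $O^+(2N)$ case from the $(1,1)$ entry $a_0+a_0=2$, correctly pins down all the constants.
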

   
   \begin{cor}\label{C2}
   Let $ \mathcal E^{{\rm ME}(N)}((0,\phi);\xi)$ denote the
   generating function (\ref{2.1}) for a matrix ensemble  ME$(N)$.
   With ME$(N)$ one of the particular matrix ensembles $O^-(2N+1), O^+(2N+1),O^+(2N),
   O^-(2N+2)$, we have that $  \mathcal E^{{\rm ME}(N)}((0,\phi);\xi)$ is given by the determinants on the left hand side of listing in Lemma \ref{L1},
   respectively, with
     \begin{equation}\label{2.7b}
      a_j = \delta_{j,0} -  \xi {\sin j  \phi  \over \pi j},
   \end{equation} 
 where here for $j=0$ the function $ {\sin j \phi  \over \pi j}$ is to be replaced by ${\phi \over \pi}$.
     
   \end{cor}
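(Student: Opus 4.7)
The plan is to recognize $\mathcal E^{{\rm ME}(N)}((0,\phi);\xi)$ as an ensemble average of a product of single-variable functions over the eigenangles, convert this into a symmetric form on $(-\pi,\pi)$ so that Lemma \ref{L1} becomes directly applicable, and then read off the Fourier coefficients $a_j$.

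First I would start from definition \eqref{2.1} and use the observation that, for any configuration $(\theta_1,\dots,\theta_N)$ of the ``reduced'' eigenangles in $(0,\pi)$,
\[
\prod_{j=1}^N \bigl(1 - \xi \chi_{(0,\phi)}(\theta_j)\bigr) = (1-\xi)^{\#\{j\,:\,\theta_j \in (0,\phi)\}},
\]
since each factor equals $1-\xi$ when $\theta_j\in(0,\phi)$ and $1$ otherwise. Taking the average with respect to \eqref{2.7a} and noting that for $0<\phi<\pi$ the conjugate partners $-\theta_j$ lie in $(-\pi,0)$ and the possible fixed eigenvalues at angles $0,\pi$ do not lie in the open interval $(0,\phi)$, the number of eigenvalues of the full spectrum in $(0,\phi)$ coincides with $\#\{j\,:\,\theta_j\in(0,\phi)\}$. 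Therefore
\[
\mathcal E^{{\rm ME}(N)}((0,\phi);\xi) = \Big\langle \prod_{j=1}^N \bigl(1-\xi \chi_{(0,\phi)}(\theta_j)\bigr)\Big\rangle_{{\rm ME}(N)}.
\]

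Next, since each $\theta_j$ is supported in $(0,\pi)$, one is free to replace the integrand by any function that agrees with $1-\xi\chi_{(0,\phi)}$ on $(0,\pi)$ without changing the average. The natural symmetric extension is
\[
a(e^{i\theta}) := 1 - \xi\,\chi_{(-\phi,\phi)}(\theta), \qquad \theta \in (-\pi,\pi),
\]
which satisfies the evenness condition $a(e^{i\theta}) = a(e^{-i\theta})$ required by Lemma \ref{L1}. A short Fourier computation
\[
a_j = \frac{1}{2\pi}\int_{-\pi}^{\pi} a(e^{i\theta})e^{-ij\theta}\,d\theta = \delta_{j,0} - \frac{\xi}{2\pi}\int_{-\phi}^{\phi} e^{-ij\theta}\,d\theta
\]
yields \eqref{2.7b}, with the $j=0$ case giving $a_0 = 1-\xi\phi/\pi$ as dictated by the stated convention.

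Finally, applying the four identities of Lemma \ref{L1} with this choice of $a$ produces the four claimed determinant formulas, one for each of $O^-(2N+1)$, $O^+(2N+1)$, $O^+(2N)$, $O^-(2N+2)$. No step is a real obstacle; the only point requiring a moment's care is the passage from the one-sided indicator $\chi_{(0,\phi)}$ on $(0,\pi)$ to the symmetric indicator $\chi_{(-\phi,\phi)}$ on $(-\pi,\pi)$, since it is this symmetrisation that couples the gap-probability generating function to the ``Hankel $+$ Toeplitz'' structure of Lemma \ref{L1}.
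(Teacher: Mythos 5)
Your proposal is correct and follows essentially the same route as the paper: the paper likewise rewrites $\mathcal E^{{\rm ME}(N)}((0,\phi);\xi)$ as the average $\bigl\langle \prod_{l=1}^N (1-\xi\mathbbm 1_{\theta_l\in(0,\phi)})\bigr\rangle_{{\rm ME}(N)}$ (citing \cite[Prop.~8.1.2]{Fo10}), observes that Lemma \ref{L1} then applies with the symmetrised $a(e^{i\theta})=1-\xi\mathbbm 1_{\theta\in(-\phi,\phi)}$, and reads off the Fourier coefficients (\ref{2.7b}). Your additional care about the conjugate eigenvalue pairs and the fixed eigenvalues at $\pm 1$ simply makes explicit what the paper leaves to the stated convention that the gap statistic refers to the eigenangles in $(0,\pi)$.
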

   
  \begin{proof}
  A well known general formula relates the generating function (\ref{2.1}) to an average over the eigenvalue probability
  density function according to
   \begin{equation}\label{2.8} 
  \mathcal E^{{\rm ME}(N)}((0,\phi);\xi) = \Big \langle \prod_{l=1}^N (1 - \xi \mathbbm 1_{\theta_l \in  (0,\phi)}) \Big \rangle_{{\rm ME}(N)};
 \end{equation} 
 see e.g.~\cite[Prop.~8.1.2]{Fo10}.
 Hence Lemma \ref{L1} applies with $a(e^{i \theta}) = 1 -  \xi \mathbbm 1_{\theta \in  (-\phi,\phi)}$, or equivalently with $a_j$ as in
 (\ref{2.7b}).
  \end{proof}  
  
Claeys et al \cite{CGMY22} have found relations for all four orthogonal integrals in terms a single unitary integral, i.e. a $\beta=2$ average, 
plus auxiliary quantities that we have already introduced. In particular the authors have derived
\begin{align}
  	\mathcal{E}^{O^{+}(2N)}((0,\phi);\xi) 
  	& = \left( \frac{\mathcal{E}_{2N,2}((-\phi,\phi);\xi)}{-\Phi_{2N-1}(1)\Phi_{2N-1}(-1)} \right)^{1/2} ,
\label{UEtoEOC:a}\\
  	\mathcal{E}^{O^{-}(2N+2)}((0,\phi);\xi) 
  	& = \left( \Phi_{2N}(1)\Phi_{2N}(-1)\mathcal{E}_{2N,2}((-\phi,\phi);\xi) \right)^{1/2} ,
\label{UEtoEOC:b}\\
  	\mathcal{E}^{O^{\pm}(2N+1)}((0,\phi);\xi) 
  	& = \left( \frac{\Phi_{2N}(\pm 1)}{\Phi_{2N}(\mp 1)}\mathcal{E}_{2N,n}((-\phi,\phi);\xi) \right)^{1/2} ,
\label{UEtoEOC:c}
\end{align}
in Prop. 1.1 and Eq. (1.10) of this work.
These relations permit the derivation of the COE analogues to \eqref{ECUEGap_Expansion}.
\begin{cor}\label{C2.3}
The even case of the COE generating function $ \mathcal{E}^{{\rm COE}_{2N}}((0,\phi);\xi) $ is found to possess the leading small $\xi$ terms
\begin{multline}
	\mathcal{E}^{{\rm COE}_{2N}}((0,\phi);\xi) = 1 - \frac{N\phi}{\pi}\xi
	+ \Bigg[
		\frac{N(N-1)}{2\pi^2}\phi^2
		+\frac{N}{2\pi}\phi
\\
		-\frac{2}{\pi^2}\sum_{j=1}^{2N-1}(2N-j)\frac{\sin^2(\tfrac{1}{2}j\phi)}{j^2}
		+\frac{1}{2\pi^2}\left( \sum_{k=1}^{N}\frac{\sin(k-\tfrac{1}{2})\phi}{k-\tfrac{1}{2}} \right)^2
		-\frac{1}{2\pi} \sum_{k=1}^{N}\frac{\sin(k-\tfrac{1}{2})\phi}{k-\tfrac{1}{2}}
	  \Bigg] \xi^2	
	+ {\rm O}(\xi^3) .
\label{evenECOEGap_Expansion}
\end{multline}
The analogous expansion of the odd case COE generating function $ \mathcal{E}^{{\rm COE}_{2N+1}}((0,\phi);\xi) $ is 
\begin{multline}
	\mathcal{E}^{{\rm COE}_{2N+1}}((0,\phi);\xi) = 1 - \frac{(2N+1)\phi}{2\pi}\xi
	+ \Bigg[
		\frac{N^2}{2\pi^2}\phi^2
		+\frac{N}{2\pi}\phi
\\
		-\frac{2}{\pi^2}\sum_{j=1}^{2N+1}(2N+1-j)\frac{\sin^2(\tfrac{1}{2}j\phi)}{j^2}
		+\frac{1}{2\pi^2}\left( \sum_{k=1}^{N}\frac{\sin k\phi}{k} \right)^2
		+\frac{\phi-\pi}{2\pi^2} \sum_{k=1}^{N}\frac{\sin k\phi}{k}
	  \Bigg] \xi^2	
	+ {\rm O}(\xi^3) .
\label{oddECOEGap_Expansion}
\end{multline} 
\end{cor}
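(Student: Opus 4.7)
The strategy is to substitute the Claeys--Glesner--Minakov--Yang relations (\ref{UEtoEOC:a})--(\ref{UEtoEOC:c}) into Proposition~\ref{P2.2} and Taylor-expand every factor to second order in $\xi$. Two conventions must be tracked throughout: because the orthogonal ensembles in Proposition~\ref{P2.2} sit on the gap $(0,\phi/2)$, the CUE factor supplied by (\ref{UEtoEOC:a})--(\ref{UEtoEOC:c}) lives on the symmetric interval $(-\phi/2,\phi/2)$ of total length $\phi$; and the rescaled coupling $\eta := \xi(2-\xi) = 2\xi - \xi^{2}$ contributes simultaneously at linear and quadratic order in $\xi$. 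Every first-order Taylor coefficient needed is supplied by the preceding Lemma at half-length $\phi/2$, the quadratic CUE contribution follows from (\ref{ECUEGap_Expansion}) after the substitutions $N \mapsto 2N$ (or $2N+2$), $\phi \mapsto \phi/2$, $\xi \mapsto 2\xi-\xi^{2}$, and the weight prefactors expand as $(1-\xi)/(2-\xi) = \tfrac12 - \tfrac{\xi}{4} - \tfrac{\xi^{2}}{8} + O(\xi^{3})$ and $1/(2-\xi) = \tfrac12 + \tfrac{\xi}{4} + \tfrac{\xi^{2}}{8} + O(\xi^{3})$.

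For the even case $N_{\mathrm{COE}} = 2N$ (so $\nu = +1$), (\ref{UEtoEOC:c}) writes each term of Proposition~\ref{P2.2} as $\sqrt{R_{\pm}\,E}$ with $R_{\pm} := \Phi_{2N}(\pm 1)/\Phi_{2N}(\mp 1)$ and $E := \mathcal{E}_{2N,2}((-\phi/2,\phi/2);\eta)$. The decisive simplification is the identity $R_{+}R_{-} \equiv 1$: writing $\sqrt{R_{\pm}} = 1 \pm \tfrac12\gamma\eta + c_{\pm}\eta^{2} + O(\eta^{3})$ with $\gamma := \alpha_{2N,1} - \beta_{2N,1}$, one has $c_{+}+c_{-} = \gamma^{2}/4$ automatically, and this is the only combination of the $c_{\pm}$ that survives the symmetric average imposed by Proposition~\ref{P2.2}. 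Thus the unrecorded second-order coefficients $\alpha_{2N,2},\beta_{2N,2}$ drop out. By (\ref{1st-order_BOPS}) at argument $\phi/2$, $\gamma$ collapses to the odd-Fourier-index sum $\tfrac{1}{\pi}\sum_{k=1}^{N}\sin((k-\tfrac12)\phi)/(k-\tfrac12)$, which is the half-integer series appearing in (\ref{evenECOEGap_Expansion}); routine bookkeeping of $\xi$-orders then yields (\ref{evenECOEGap_Expansion}).

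The odd case ($\nu = -1$) proceeds analogously, but now (\ref{UEtoEOC:b}) at index $N$ and (\ref{UEtoEOC:a}) at index $N+1$ supply the radicands $\Phi_{2N}(1)\Phi_{2N}(-1)\mathcal{E}_{2N,2}$ and $-\mathcal{E}_{2N+2,2}/(\Phi_{2N+1}(1)\Phi_{2N+1}(-1))$. The integer-indexed sum $\tfrac{1}{\pi}\sum_{k=1}^{N}(\sin k\phi)/k$ in (\ref{oddECOEGap_Expansion}) emerges from both $\alpha_{2N,1}+\beta_{2N,1}$ and $\alpha_{2N+1,1}-\beta_{2N+1,1}$, which pick out the even Fourier indices $j=2k$ in (\ref{1st-order_BOPS}); the additional $(\phi-\pi)/(2\pi^{2})$ prefactor traces back to the CUE-size mismatch between $\mathcal{E}_{2N,2}$ and $\mathcal{E}_{2N+2,2}$, whose leading Taylor coefficients $-N\phi/\pi$ and $-(N+1)\phi/\pi$ differ by $-\phi/\pi$. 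The main obstacle is that the clean ratio identity $R_{+}R_{-} = 1$ of the even case has no direct analogue, so the combinations of second-order Taylor coefficients $\alpha_{2N,2}+\beta_{2N,2}$ and $\alpha_{2N+1,2}-\beta_{2N+1,2}$ entering through the two radicands do not cancel by inspection; the cleanest route to dispatch this is either to propagate the recurrence (\ref{Phi-_recur}) one further order in $\xi$, or to invoke a Heine--Szeg\H{o}-type product identity relating $\Phi_{2N}(1)\Phi_{2N}(-1)\mathcal{E}_{2N,2}$ to $-\Phi_{2N+1}(1)\Phi_{2N+1}(-1)\mathcal{E}_{2N+2,2}$ beyond first order in $\eta$.
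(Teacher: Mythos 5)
Your overall route coincides with the paper's: substitute the Claeys et al.\ relations \eqref{UEtoEOC:a}--\eqref{UEtoEOC:c} into \eqref{2.7}, then expand the prefactors, the CUE factor \eqref{ECUEGap_Expansion} (at half-length and coupling $\eta=\xi(2-\xi)$) and the polynomial factors to second order in $\xi$, with \eqref{1st-order_BOPS} supplying the first-order data. Your treatment of the even case is complete, and the observation that $\sqrt{R_+}\sqrt{R_-}=1$ forces $c_++c_-=\gamma^2/4$, so that only first-order information about $\Phi_{2N}(\pm1)$ can survive, is a clean way of seeing a cancellation the paper uses only implicitly; since $c_\pm$ multiplies $\eta^2={\rm O}(\xi^2)$, only the common leading value $1/2$ of the two prefactors is relevant and the slight asymmetry between $(1-\xi)/(2-\xi)$ and $1/(2-\xi)$ does no harm.

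The odd case, however, is left with a genuine gap. You correctly isolate the obstruction --- the radicands $\Phi_{2N}(1)\Phi_{2N}(-1)\mathcal E_{2N,2}$ and $-\mathcal E_{2N+2,2}/(\Phi_{2N+1}(1)\Phi_{2N+1}(-1))$ feed in the second-order coefficients through $\alpha_{2N,2}+\beta_{2N,2}$ and $\alpha_{2N+1,2}-\beta_{2N+1,2}$, with no ratio identity available to kill them --- but you stop at naming two candidate strategies rather than executing either, so the coefficient of $\xi^2$ in \eqref{oddECOEGap_Expansion} is not actually established. The paper carries out precisely your first suggestion: pushing the recurrence \eqref{Phi-_recur} one order further in $\xi$ yields $\alpha_{n+1,2}-\alpha_{n,2}=R_{n+1,2}+R_{n+1,1}\sum_{j=1}^{n}R_{j,1}$ and $\beta_{n+1,2}+\beta_{n,2}=R_{n+1,2}+R_{n+1,1}\sum_{j=1}^{n}(-1)^{j}R_{j,1}$, so that in the one combination which survives the two square roots and the prefactors, namely $\alpha_{2N+1,2}-\alpha_{2N,2}-\beta_{2N+1,2}-\beta_{2N,2}$, the unknown $R_{2N+1,2}$ cancels and the remainder $2R_{2N+1,1}\sum_{j\ {\rm odd}}R_{j,1}$ is expressible through \eqref{r_Expansion}. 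Supplying this computation (or the Heine--Szeg\H{o}-type identity you allude to, which is not in the paper) is required to close the argument; as written, the odd half of the corollary is asserted rather than proved.
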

\begin{proof}
In the even case of the COE generating function  the leading terms are found using \eqref{ECUEGap_Expansion}, 
along with \eqref{2.7} and both cases of \eqref{UEtoEOC:c} and involves additional contributions from \eqref{1st-order_BOPS}.
For the odd case  the leading terms are found using \eqref{ECUEGap_Expansion}, 
along with \eqref{2.7} and both \eqref{UEtoEOC:a} and \eqref{UEtoEOC:b} and also involves additional contributions from \eqref{1st-order_BOPS}.
In contrast to the even case we find the presence of second order Taylor coefficients $\alpha_{n,2}$ and $\beta_{n,2}$ in the first two orders of the odd generating function. 
However they only appear in the following combination $\alpha_{2N+1,2}-\alpha_{2N,2}-\beta_{2N+1,2}-\beta_{2N,2}$, and due to the identities
\begin{equation}
	\alpha_{n+1,2}-\alpha_{n,2} = R_{n+1,2} + R_{n+1,1}\sum_{j=1}^{n}R_{j,1} ,
\qquad
	\beta_{n+1,2}+\beta_{n,2} = R_{n+1,2} + R_{n+1,1}\sum_{j=1}^{n}(-1)^jR_{j,1} ,
\label{key}
\end{equation}
all the second order terms cancel out of that particular combination.
\end{proof}

The results of Corollary \ref{C2.3} imply two distinct formulas for  ${\rm Var} \, \mathcal N_{(0,\phi)}^{(\beta)} |_{\beta = 1}$,
depending on the parity of $N$.
Integration according to the first equality in (\ref{2.4})
leads to the same result \eqref{COE_zero-intercept} in both cases

  For small $N$, use of Corollary \ref{C2} in (\ref{2.7}) allows the integrals in (\ref{2.2}) to be evaluated in closed form, 
  which when substituted in (\ref{2.4}) allows for a corresponding closed form evaluation of
  $S_{N,1}(\omega)$. This is analogous to the situation for  $S_{N,2}(\omega)$ for small $N$ (recall Remark \ref{R1a}).
  Specifically, in the case $N=3$ we calculate
  \begin{equation}\label{2.9}  
  S_{N,1}(\omega) = {(-225 - 60 \pi^2 + 14 \pi^4) \over 24 \pi^4} + {(\pi^2 - 6)  \over 2 \pi^2} \cos \omega + {(225 - 48 \pi^2 + 4 \pi^4) \over 24 \pi^4} \cos 2 \omega.
   \end{equation}  
   It is furthermore the case that the use of Corollary \ref{C2} in (\ref{2.7}), together with (\ref{2.2}), allows for a numerical
   evaluation of $S_{N,1}(\omega) $. This  can be carried out along the same lines as that detailed for the numerical
   evaluation of  $S_{N,2}(\omega)$ in the paragraph after the one containing (\ref{2.5}).
  The results for $N=20,40$ are displayed in Figure \ref{Fa3}, with the result (\ref{COE_zero-intercept}) relating to the value at
  $\omega = 0$ providing a consistency check. Also, since $N$ is even in our examples, only the first two identities in
  Lemma \ref{L1} are required.
  
\begin{figure*}
\centering
\includegraphics[width=0.8\textwidth]{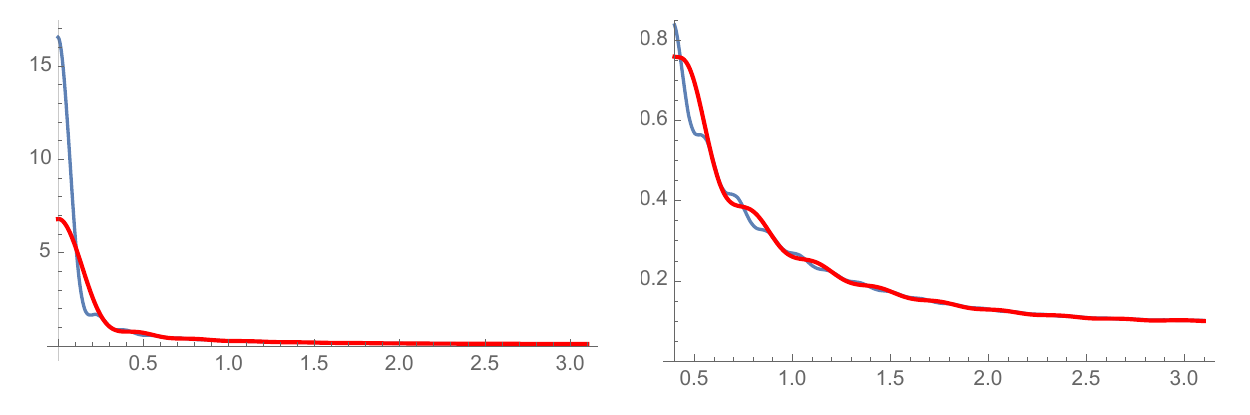}
\caption{[color online] Plot of $S_{N,1}(\omega)$, for $N=20$ (heavy red line) and $N=40$ over the full range of $\omega$, and over a restricted range. Note that the latter  
exhibits evidence of (non-uniform) convergence to a limiting functional form as $N$ increases.}
\label{Fa3}
\end{figure*}
  
  Analogues of (\ref{2.5x}) for the matrix averages in Lemma \ref{L1}, which as for $\beta = 2$ provide an alternative computational
  scheme for the evaluation of $S_{N,1}(\omega)$. For brevity of presentation, we will restrict attention to the case of $N$ even and
  thus (as noted at the end of the previous paragraph) only the averages with respect to $O^\pm(2N+1)$ are required. The key point is that
  the eigenvalues form a determinantal point process, here with kernel (see e.g.~\cite[Prop. 5.5.3]{Fo10})
 \begin{equation}\label{K1}
 K_N^{O^\pm(2N+1)}(\theta, \theta') = {1 \over 2 \pi}  \bigg ( {\sin N (\theta - \theta') \over \sin((\theta - \theta')/2)} \mp
 {\sin N (\theta + \theta') \over \sin((\theta + \theta')/2)} \bigg )
    \end{equation} 
 (cf.~(\ref{2.5a})). The analogue of (\ref{2.5x}) is then
  \begin{equation}\label{K1a} 
  \mathcal  E^{O^\pm(2N+1)} ((0, \phi);\xi) = \det \Big ( \mathbb I - \xi \mathcal  K^{O^\pm(2N+1) , (0,\phi)}_N \Big ),
 \end{equation}
 where $\mathcal  K^{O^\pm(2N+1) , (0,\phi)}_N$ is the integral operator on $(0,\phi)$ with kernel (\ref{K1}),
  for which the numerical methods of \cite{Bo08,Bo10} apply.
  
  It is furthermore the case that $ E^{O^\pm(2N+1)} ((0, \phi);\xi)$ permits a $\tau$-function expression resulting from 
  Okamoto's Hamiltonian theory of the Painlev\'e VI equation \cite{FW04}. Explicitly \cite[Eq.~(5.26)--(5.29)]{BFM17}
  \begin{equation}\label{Sa1a}
  \mathcal   E^{{\rm O}^\pm(2N+1)}((0,\phi);\xi) =  \exp \int_0^{\sin^2 \phi/2} {u^\pm(t;\xi) \over t ( t - 1)} \, dt,
 \end{equation}
 where $u=u^\pm$ satisfies the particular $\sigma$PVI equation
 \begin{equation}\label{Sa3}  
 (t(1-t)u'')^2 + (u' - N^2)(2 u + (1 - 2t)u')^2 - (u')^2 \Big (u' - N^2 + \tfrac{1}{4} \Big ) = 0,
 \end{equation}
 subject to the boundary condition
 \begin{equation}\label{Sa4} 
 u^\pm(t;\xi) \mathop{\sim}\limits_{t \to 0^+} \left \{ \begin{array}{ll} \displaystyle {8 N(N^2-1/4) \over 3 \pi} \xi t^{3/2} ,  & O^+(2N+1)\\[.2cm]
  \displaystyle {2 N \over \pi} \xi t^{1/2} (1 + {\rm O}(t)), &   O^-(2N+1). \end{array} \right.
 \end{equation}
 
We remark that the even cases $ E^{O^+(2N)}((0,\phi);\xi)$ and $ E^{O^-(2N+2)}((0,\phi);\xi)$ are characterised by another pair of Painlev\'e VI $\tau$-functions,
satisfying a very similar $\sigma$-form and distinguished only by their boundary conditions. These follow from (\ref{2.7a}) with the change of variables
$(1 - \cos \theta_l)/2= x_l$, and the results of \cite[\S 3.1]{FW04} or \cite[\S 8.3.1]{Fo10}.
 
 \subsection{The case $\beta = 4$}\label{S2.4}
 As for the COE, the eigenvalues of the CSE form a Pfaffian point process. However the conditioned spacing probabilities of the
 COE and CSE are inter-related due to an identity relating the joint eigenvalue PDF for the COE with $2N$ eigenvalues, integrated over
 alternative eigenvalues, to the eigenvalue PDF of the CSE \cite{MD63}. On the other hand, we already know that the
 conditioned spacing probabilities of the
 COE are related to those for the orthogonal group, suggesting the same for the CSE. In fact the simple relation \cite[Eq.~(8.158) ]{Fo10},
 \cite[Eq.~(5.16)]{BFM17}
 \begin{equation}\label{2.7S} 
  \mathcal   E^{{\rm CSE}_N}((0,\phi);\xi) = {1 \over 2} \Big (   \mathcal   E^{O^+(2N+1)}((0,\phi/2);\xi)   +   \mathcal  E^{O^{-}(2N+1)}((0,\phi/2);\xi) \Big )
  \end{equation} 
 holds true. Combined with Lemma \ref{L1}, Corollary \ref{C2} and  (\ref{2.2}), this offers both a scheme for the exact evaluation of
 $S_{N,4}(\omega)$ (at least for small $N$), as well as an efficient numerical approach the computation of $S_{N,4}(\omega)$.

\begin{cor}
The leading terms in the $\xi$-expansion of the CSE generating function are given by
\begin{multline}
	\mathcal{E}^{{\rm CSE}_N}((0,\phi);\xi) = 1 - \frac{N\phi}{2\pi}\xi
\\
	+ \left[
		\frac{N(N-1)}{8\pi^2}\phi^2
		-\frac{1}{\pi^2}\sum_{j=1}^{2N-1}(N-\tfrac{1}{2}j)\frac{\sin^2(\tfrac{1}{2}j\phi)}{j^2}
		+\frac{1}{8\pi^2}\left( \sum_{k=1}^{N}\frac{\sin(k-\tfrac{1}{2})\phi}{k-\tfrac{1}{2}} \right)^2
	\right] \xi^2
\\	
	+ {\rm O}(\xi^3) ,
\label{ECSEGap_Expansion}
\end{multline}
\end{cor}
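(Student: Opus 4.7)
The plan is to chain the CSE-to-orthogonal reduction (\ref{2.7S}) with the Claeys--Girotti--Mezzadri--Yang identity (\ref{UEtoEOC:c}), and then perform an $\mathrm{O}(\xi^2)$ Taylor expansion using the CUE data (\ref{ECUEGap_Expansion}) together with the first order BOPs values (\ref{1st-order_BOPS}).

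First I would substitute (\ref{UEtoEOC:c}) at argument $\phi/2$ into the right hand side of (\ref{2.7S}) to arrive at
\[
\mathcal{E}^{{\rm CSE}_N}((0,\phi);\xi) = \tfrac{1}{2}\,\mathcal{E}_{2N,2}((-\phi/2,\phi/2);\xi)^{1/2}\,\bigl[X^{1/2}+X^{-1/2}\bigr],
\]
where $X := \Phi_{2N}(1)/\Phi_{2N}(-1)$, and the monic orthogonal polynomials refer to the weight on the arc $(-\phi/2,\phi/2)$. Rotational invariance of the CUE lets me rewrite the central factor as $\mathcal{E}_{2N,2}((0,\phi);\xi)^{1/2}$, and the substitutions $N\mapsto 2N$, $\phi\mapsto \phi/2$ in (\ref{ECUEGap_Expansion}) then supply its expansion to order $\xi^2$.

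Next I would expand $X^{\pm 1/2}$. Using $(-1)^{2N}=1$ in (\ref{1st-order_BOPS}) (with $\phi$ replaced by $\phi/2$) gives $X = 1+(\alpha_{2N,1}-\beta_{2N,1})\xi + \mathrm{O}(\xi^2)$, and a direct binomial expansion yields
\[
X^{1/2}+X^{-1/2} = 2 + \tfrac{1}{4}(\alpha_{2N,1}-\beta_{2N,1})^2\,\xi^2 + \mathrm{O}(\xi^3);
\]
in particular the linear-in-$\xi$ contributions cancel out of the sum. The difference $\alpha_{2N,1}-\beta_{2N,1}$ retains only the odd-$j$ terms in the defining sum, and after reindexing $j=2k-1$ with $k=1,\ldots,N$, one obtains exactly $\frac{1}{\pi}\sum_{k=1}^{N}\sin((k-\tfrac{1}{2})\phi)/(k-\tfrac{1}{2})$, which is the inner sum whose square appears in (\ref{ECSEGap_Expansion}).

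Finally, multiplying the square-root CUE expansion against this bracket and collecting terms through $\xi^2$ is routine binomial algebra: the $\xi^1$ coefficient is half that of $\mathcal{E}_{2N,2}((0,\phi);\xi)$, giving $-N\phi/(2\pi)$; the $\phi^2$ piece of the $\xi^2$ coefficient simplifies via $N(2N-1)/(8\pi^2)-N^2/(8\pi^2) = N(N-1)/(8\pi^2)$; the trigonometric sum $-\frac{1}{\pi^2}\sum_{j=1}^{2N-1}(N-j/2)\sin^2(j\phi/2)/j^2$ comes directly from the corresponding CUE term (halved because of the square root); and the squared sum originates from the orthogonal polynomial factor computed above. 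No new input beyond the data already in the preceding Lemma and Corollary is required; the main obstacle is purely organizational---arranging the square-root expansions so that the antisymmetric BOPs contributions cancel in $X^{1/2}+X^{-1/2}$, and applying the correct reindexing of the sum over odd integers.
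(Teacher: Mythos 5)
Your proposal is correct and follows essentially the same route as the paper: combining (\ref{2.7S}) with both cases of (\ref{UEtoEOC:c}), expanding the CUE factor via (\ref{ECUEGap_Expansion}) and the polynomial ratio via (\ref{1st-order_BOPS}), with the observation that $X^{1/2}+X^{-1/2}=2+\tfrac{1}{4}(\alpha_{2N,1}-\beta_{2N,1})^2\xi^2+{\rm O}(\xi^3)$ being precisely the mechanism behind the paper's remark that the second-order BOPs coefficients cancel. The algebraic details (the $N(N-1)/(8\pi^2)$ simplification and the odd-$j$ reindexing to $k-\tfrac{1}{2}$) all check out.
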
 
\begin{proof}
These leading terms are found using \eqref{ECUEGap_Expansion}, along with \eqref{2.7S}, and both cases of \eqref{UEtoEOC:c}. 
This also involves additional contributions from \eqref{1st-order_BOPS}.
Interestingly only the first order contributions to the bi-orthogonal polynomials are required as the second order terms cancel out. 
\end{proof}

From  \eqref{ECSEGap_Expansion} we read off the evaluation of 
 ${\rm Var} \, \mathcal N_{(0,\phi)}^{(\beta)} |_{\beta = 4}$. 
The CSE intercept value \eqref{CSE_zero-intercept}  follows upon the integration according to the first equality in (\ref{2.4}). 

The formula (\ref{2.7S}) can be used in conjunction with Corollary \ref{C2} to obtain explicit evaluations of  $S_{N,4}(\omega)$ for small $N$, and
numerical tabulations for moderate $N$. In relation to the former, for $N = 3$ as an example, we calculate
  \begin{multline}\label{2.7T} 
  S_{N,4}(\omega) \Big |_{N=3} = { (-148225 - 110880 \pi^2 + 20736 \pi^4) \over 55296 \pi^4} + { (-145145 - 92400 \pi^2 + 20736 \pi^4 )\over 69120 \pi^4} \cos \omega \\+
    { (1321705 - 184800 \pi^2 + 20736 \pi^4) \over 276480 \pi^4} \cos 2 \omega.
    \end{multline} 
    In relation to the latter, we can make use of code already used for the numerical computation of
    $S_{N,1}$ due to the appearance of $ E^{O^\pm(2N+1)}((0,\phi/2);z) $  in both. The results for $N=20,40$ are displayed in Figure \ref{Fa4}, with the result (\ref{2.4}) relating to the value at
  at $\omega = 0$ again providing a consistency check.

         \begin{figure*}
\centering
\includegraphics[width=0.8\textwidth]{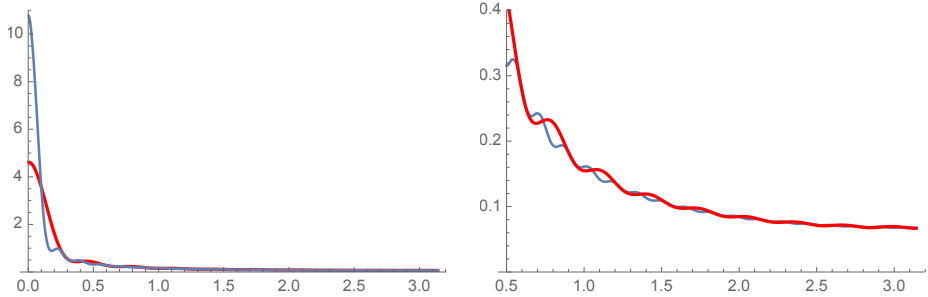}
\caption{[color online] Plot of $S_{N,4}(\omega)$, for $N=20$ (heavy red line) and $N=40$ over the full range of $\omega$, and over a restricted range. We again note that the latter  
exhibits evidence of (non-uniform) convergence to a limiting functional form as $N$ increases.}
\label{Fa4}
\end{figure*}

We remark that use of (\ref{K1a}) in (\ref{2.7S}) provides an expression for $S_{N,4}$ in terms of Fredholm determinants, while use of (\ref{Sa1a}) provides
for an expression in terms of two PVI $\tau$-functions.

\section{The $N \to \infty$ limit of $S_{N,\beta}(\omega)$}\label{S3}
\subsection{Analytic aspects}
We see from Proposition \ref{P2.1} that the key to calculating the $N \to \infty$ limit of $S_{N,\beta}(\omega)$ is to quantify the corresponding asymptotic behaviour of 
the generating function $ \mathcal E_{N,\beta}((0,\phi);1 - e^{i \omega})$. For this one notes that it follows from the definition (\ref{2.1}), and the definition of $\mathcal N_{(0,\phi)}^{(\beta)}$ as
introduced below Proposition \ref{P2.1}, that
 \begin{equation}\label{3.1}
 \mathcal E_{N,\beta}((0,\phi);1 - e^{i \omega}) = \Big \langle  e^{i \omega \mathcal N_{(0,\phi)}^{(\beta)} } \Big \rangle,
\end{equation} 
where the average $\langle \cdot \rangle$ is with respect to the circular $\beta$ ensemble eigenvalue probability density function (\ref{2.0}).
The significance of this identification is the known asymptotic expansion \cite{FL20} (see also \cite[Eq.~(49)]{SLMS21a})\footnote{Beyond the case
$\beta = 2$, this is based on a yet to be proved conjecture from \cite{FF04}. In the case $\beta = 2$, the expansion has
been extended one term further; see \cite[\S 2.3]{KKMST09}.}
 \begin{equation}\label{3.2}
 \log  \Big \langle  e^{i \omega( \mathcal N_{(0,\phi)}^{(\beta)}  - N \phi/(2 \pi) )} \Big \rangle = 2 \Big (  t^2 \log \Big (
 2 N \sin {\phi \over 2}  \Big ) + \log (A_\beta(t)A_\beta(-t))   + \cdots \Big ) \Big |_{t = i \omega/ \sqrt{2 \pi^2 \beta}}
\end{equation} 
Here, for $\beta = 1,2,4$ the constant (i.e.~independent of  $N$ and $\phi$) $A_\beta(t)$ is given in terms of the Barnes $G$ function\footnote{This
special function is related to the gamma function via the functional relation $G(z+1) = \Gamma(z) G(z)$.} by
 \begin{equation}\label{3.2z}
 2^{-t^2/2} G \Big ( 1 + {it \over \sqrt{2 }} \Big ) {G \Big ( 3/2 + {it \over \sqrt{2 }} \Big )  \over G(3/2)}, \quad G \Big ( 1 + it \Big ), \quad
 G \Big ( 1 +{ it  \over \sqrt{2}}\Big ) {G \Big ( 1/2 + {it \over \sqrt{2 }} \Big ) \over G(1/2)}
 \end{equation} 
 respectively, and in fact can be expressed in terms of this special function for all $\beta$ rational, 
while the terms not shown go to zero as
$N \to \infty$. 

For fixed $\phi$ one observes that (\ref{3.2}) is consistent with the Gaussian fluctuation theorem of Killip \cite[Eq.~(1.5)]{Ki08}, which asserts
 \begin{equation}\label{3.2a}
 \lim_{N \to \infty} \Big ( {\pi^2 \beta \over 2 \log N} \Big )^{1/2} \Big (  \mathcal N_{(0,\phi)}^{(\beta)}  - N \phi/(2 \pi) \Big ) \mathop{=}^{\rm d} {\rm N}[0,1].
 \end{equation}
 Important for us is the modification of (\ref{3.2}) to the setting that $\phi$ simultaneously goes to zero as $N$ goes to infinity, but slow enough that
$\phi N \to \infty$. Formally this is obtained by simply replacing $ \sin {\phi \over 2}$ by its leading small $\phi$ form ${\phi \over 2}$, so obtaining
\begin{equation}\label{3.2X}
  \Big \langle  e^{i \omega \mathcal N_{(0,\phi)}^{(\beta)} }  \Big \rangle  \mathop{\sim}_{\phi N \to \infty} 
  e^{i \omega N \phi/(2 \pi) - {\omega^2 \over 2} \sigma_{\beta,N} + {\rm O}(1)}, \quad \sigma_{\beta,N}^2 = {2 \over \beta \pi^2} \log \phi N.
\end{equation} 
The validity of (\ref{3.2X}) for $\beta = 2$ follows from the underlying determinantal structure. Thus (up to the bound on the error term 
which itself follows from the more refined analysis of \cite{CK15}), the determinantal structure implies all that is required for the validity of such
a Gaussian fluctuation formula is the limiting divergence of the variance $ \sigma_{\beta,N}$
 \cite{CL95}, \cite{So00}. Moreover, known superposition and interlacing inter-relations linking the COE, CUE, CSE 
(see the recent review \cite{Fo24}) were shown in  \cite{CL95}, \cite{OR10} to lift the validity of (\ref{3.2X}) from $\beta = 2$ to the cases $\beta = 1$ and 4 as well.

Also important for us is that for $\beta = 1,2$ and 4, by making use of the Fredholm determinant formulas implied by  (\ref{2.5x}) and (\ref{K1a}), it is a straightforward matter to
establish the uniform convergence 
\begin{equation}\label{3.2Y}
\mathcal E_{N,\beta}((0,2 \pi s/N);1 - e^{i \omega}) \to   \mathcal E_{\infty, \beta} ((0,s); 1 - e^{i \omega}), \quad 0 \le s < s_0.
\end{equation} 
Here the $ \mathcal E_{\infty, \beta}$ are bounded, explicit functional forms (given in Proposition \ref{P3.2} below),
which moreover exhibit asymptotics compatible with (\ref{3.2X}) for large $s$ (given in
 (\ref{3.2f}) below).

 As a consequence of (\ref{3.2X}) and (\ref{3.2Y}) (proved for $\beta = 1,2,4$ but expected to remain true for general $\beta > 0$), the large $N$ asymptotics of the integrals (\ref{2.2}) can be deduced, and an expression obtained for 
 the $N \to \infty$ limit of $S_{N,\beta}(\omega)$. (See  \cite[\S 4.1]{RK23} for independent working in the case $\beta = 2$.)
 
 \begin{prop}
  We have
 \begin{equation}\label{3.2b}
 \lim_{N \to \infty} I_{N,0}^{(\beta)}(z) = {1 \over 2 \pi} \int_0^\infty  \mathcal E_{\infty,\beta}((0,s);1 - z) \, d s, \quad  \lim_{N \to \infty} I_{N,1}^{(\beta)}(z) = 0,
 \end{equation} 
 where $  \mathcal E_{\infty,\beta}((0,s);1 - z)$ is the conditioned gap probability generating function for the $N \to \infty$ limit of the Dyson circular ensemble  \cite{KN04},
 scaled so that the density is unity. Consequently
  \begin{equation}\label{3.2c}
  S_{\infty,\beta}(\omega) := \lim_{N \to \infty} S_{N,\beta}(\omega) = {1 \over  2 \sin^2 {\omega \over 2}} {\rm Re}
  \int_0^\infty  \mathcal E_{\infty, \beta} ((0,s); 1 - e^{i \omega}) \, ds.
   \end{equation} 
\end{prop}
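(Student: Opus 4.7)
The plan is to pass to the limit $N \to \infty$ inside each term of the Riser--Kanzieper identity (\ref{2.3}). The engine is the change of variable $\phi = 2\pi s/N$ in the integrals (\ref{2.2}) defining $I_{N,q}^{(\beta)}(z)$: after this substitution the integrand becomes the scaled generating function $\mathcal E_{N,\beta}((0, 2\pi s/N); 1-z)$ integrated over $s \in (0, N/2)$, and in the $q=1$ case an extra factor of $s/N$ appears. The pointwise convergence (\ref{3.2Y}) of the scaled generating function to $\mathcal E_{\infty,\beta}((0,s); 1-z)$ formally yields both claimed limits; the $1/N$ from the substitution is what forces $I_{N,1}^{(\beta)}(z) \to 0$ once the tail is controlled.

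The technical heart is justifying the exchange of limit and integration on an unbounded range, for an integrand that is only \emph{conditionally} convergent. I would split the $s$-integration into $(0, M)$ and $(M, N/2)$ for a large parameter $M$. On the compact window $(0, M)$, the uniform-on-compacta convergence (\ref{3.2Y}) combined with the trivial pointwise bound $|\mathcal E_{N,\beta}((0,\phi); 1-e^{i\omega})| \le 1$ (which follows directly from (\ref{2.1}), since $|1-\xi| = 1$ when $\xi = 1 - e^{i\omega}$) lets one pass the limit inside by dominated convergence. On the tail $(M, N/2)$, one invokes the sharper Gaussian fluctuation form (\ref{3.2X}), which to leading order gives
\begin{equation*}
\mathcal E_{N,\beta}((0, 2\pi s/N); 1 - e^{i\omega}) \sim e^{i\omega s}\, (2\pi s)^{-\omega^2/(\beta\pi^2)},
\end{equation*}
uniformly in $N$ for $M \le s \le N/2$. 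The explicit oscillatory factor $e^{i\omega s}$ converts an a priori borderline power-law integral into a conditionally convergent one via integration by parts, so that the tail contribution to $\mathrm{Re}\, I_{N,0}^{(\beta)}(z)$ can be made arbitrarily small by taking $M$ large. For $I_{N,1}^{(\beta)}(z)$ the analogous tail estimate, multiplied by $1/N$, vanishes in the limit.

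Once both limits are in hand, the substitution into (\ref{2.3}) is routine: the term multiplied by $1 - z^{-N}$ is a bounded factor times $I_{N,1}^{(\beta)}(z)$ and so vanishes, while the final term carries a $1/N$ in front of $(\mathrm{Re}\, z^{-N/2} I_{N,0}^{(\beta)}(z))^2$ and also vanishes since $I_{N,0}^{(\beta)}(z)$ is bounded in $N$. What remains is $\frac{2z}{(1-z)^2}(-\mathrm{Re}\, \lim_{N\to\infty} I_{N,0}^{(\beta)}(z))$, and the elementary identity $\frac{2 e^{i\omega}}{(1-e^{i\omega})^2} = -\frac{1}{2\sin^2(\omega/2)}$ then delivers (\ref{3.2c}).

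The main obstacle is the uniform-in-$N$ tail estimate on $(M, N/2)$: the leading Gaussian asymptotic (\ref{3.2X}) alone is not quite enough --- one needs that asymptotic with a controlled error, uniformly in $N$ at large $s$, to legitimise the integration-by-parts step. For $\beta = 1, 2, 4$ this can be anchored on the Fredholm determinant structures (\ref{2.5x}), (\ref{K1a}) and (\ref{2.7S}) together with the sharper analysis of \cite{CK15} cited in the text; for general $\beta > 0$ the step is conditional on the conjectural expansion (\ref{3.2}) that the authors themselves flag.
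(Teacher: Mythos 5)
Your proposal is correct and follows essentially the same route as the paper: the paper likewise splits the integration range (at $\phi=\Omega(N)/N$ with $1\ll\Omega(N)\ll N$, rather than your fixed cut $s=M$), kills the far region using the Gaussian fluctuation asymptotics (\ref{3.2X}), and handles the near region via the change of variables $\phi=2\pi s/N$ and the uniform convergence (\ref{3.2Y}) ``with tail bound compatible with (\ref{3.2X})'' --- precisely the uniform-in-$N$ error control you correctly flag as the technical crux. Your added details (the bound $|\mathcal E_{N,\beta}|\le 1$, the integration-by-parts treatment of the oscillatory tail, and the verification that $2e^{i\omega}/(1-e^{i\omega})^2=-1/(2\sin^2(\omega/2))$) merely flesh out what the paper leaves implicit, and your computation in fact confirms that the limit of $I_{N,0}^{(\beta)}$ equals $\int_0^\infty\mathcal E_{\infty,\beta}\,ds$ without the $1/(2\pi)$ prefactor, consistent with (\ref{3.2c}) as stated.
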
 

\begin{proof}
In (\ref{2.2}) we break the integral up into the regions $\phi \in (0,\Omega(N)/N)$ and $\phi \in (\Omega(N)/N, \pi)$, where $1 \ll \Omega(N) \ll N$.
Use of the asymptotic formula (\ref{3.2X}) shows that the latter region does not contribute. 
 For the integral over $ (0,\Omega(N)/N)$, one changes variables $\phi = 2 \pi  s/N$ and makes use uniform convergence  (\ref{3.2Y}) with tail
 bound compatible with (\ref{3.2X}). From this we conclude (\ref{3.2b}). The formula (\ref{3.2c}) now follows  by applying (\ref{3.2b}) to (\ref{2.3}).
\end{proof}

\begin{remark} 
Let $\mathcal N_{(0,s)}^{\infty,(\beta)}$ denote the random variable corresponding to the counting function
for the number of eigenvalues in the interval $(0,s)$ of the infinite circular $\beta$ ensemble, scaled 
to have unit density. A result of Costin and Lebowitz \cite{CL95} has established the central limit theorem 
\begin{equation}\label{3.2d}
 \lim_{s \to \infty}{1 \over \sigma_s} \Big (  \mathcal N_{(0,s)}^{\infty,(\beta)}  - s \Big ) \mathop{=}^{\rm d} {\rm N}[0,1], \quad \sigma_s^2
:= 2  (\log s)/ (\pi^2 \beta)
 \end{equation}
 (cf.~(\ref{3.2a})) valid  for $\beta = 1,2$ and 4 at least. Consequently, we have  the large $s$ asymptotic expansion
 \begin{equation}\label{3.2f} 
 \mathcal E_{\infty, \beta} ((0,s); 1 - e^{i \omega}) = \Big \langle  e^{i \omega \mathcal N_{(0,s)}^{\infty,(\beta)} } \Big \rangle \mathop{\sim}_{s \to \infty}
 e^{i  \omega s- (\omega^2/2) \sigma_s^2 + {\rm O}(1)},
  \end{equation}
   (cf.~(\ref{3.2X}))
  where here in the equality the average is with respect to the unit density infinite circular $\beta$ 
 ensemble\footnote{Formally writing $\phi = 2 \pi s /N$ in (\ref{3.2}), where $1 \ll s \ll N$ reclaims (\ref{3.2f}). This procedure furthermore predicts that the
  O$(1)$ term is equal to
  $$
  2 t^2 \log 2 \pi + 2 \log(A_\beta(t) A_\beta(-t)) \Big |_{t = i \omega / \sqrt{2 \pi^2 \beta}},
  $$
  where $A_\beta(t)$ is given explicitly in (\ref{3.2z}) for $\beta=1,2$ and 4.}.
   The convergence of the integral in (\ref{3.2c})
  at infinity follows from this large $s$ form, which itself is oscillatory and decaying. 
  \end{remark}
  
  There is particular interest in the small $\omega$ form of $S_{\infty,\beta}(\omega)$, as in applications to quantum
  chaos this distinguishes chaotic from integrable spectra \cite{RGMRF02,ROK17} .
  In keeping with Fourier transform theory \cite{Li58}, we expect that the small $\omega$ singular
  behaviour of $S_N(\omega)$ is determined by the large distance asymptotics of $ \mathcal E_{\infty, \beta} ((0,s); 1 - e^{i \omega})$,
  this being the integrand in (\ref{3.2c})\footnote{For a recent work in random matrix theory where this aspect of Fourier transform theory is
  put to application, see \cite{FS23}.}.
Assuming this, (\ref{3.2f}) and (\ref{3.2c}) then give
 \begin{multline}\label{3.2g} 
 S_{\infty,\beta}(\omega)  \mathop{\sim}_{\omega \to 0}^{\cdot}{2 \over \omega^2}  \int_0^\infty s^{-\omega^2/(\pi^2 \beta)} \cos \omega s \, ds ={2 \over \omega^2}  |\omega|^{-1+\omega^2/(\pi^2\beta)} \sin(\omega^2/(2 \pi \beta))
 \Gamma(1 - \omega^2/(\pi^2 \beta)) \\
 \sim {1 \over  \pi  \beta} {1 \over | \omega|} + {1 \over  \pi^3 \beta^2} | \omega| \log | \omega| + {\rm O}(|\omega|).
   \end{multline} 
   Here the symbol $ \overset{\cdot}{\sim} $ denotes the singular portion of the asymptotic expansion only. In the case $\beta = 2$, the asymptotic
   expansion (\ref{3.2g}) has been established rigorously \cite{RK23}, and moreover the next term in the asymptotic expansion has been determined to equal
   $|\omega|( {1 \over 24 \pi} - {1 \over 4 \pi^3} \log(2 \pi) )$. 
   
   The result  (\ref{3.2g}) has consequence  in the study of the covariances in (\ref{1.1x}).
   Previously   \cite{RTK23}  in the case $\beta = 2$ it has been shown in how to use (\ref{3.2g}) in (\ref{1.1x}) to deduce several terms in the large $k$ expansion
   of the covariance ${\rm cov}_\infty  (s_j(0), s_{j+k}(0))$. Leaving $\beta$ as a variable, the same calculation gives
  \begin{equation}\label{3.9}   
  {\rm cov}_\infty  (s_j(0), s_{j+k}(0)) \mathop{\sim}\limits_{k \to \infty} - {1 \over \beta \pi^2 k^2} - {6 \over \beta^2 \pi^4 k^4}
  \Big ( \log (2 \pi k) + \gamma - {11 \over 6} \Big ) + \cdots,
  \end{equation}
  where $\gamma$ denotes Euler's constant. In fact this asymptotic prediction can be validated numerically.
 For this one first recalls that the footnote associated with the paragraph containing (\ref{1.1x}) gives a formula for the LHS of (\ref{3.9}) in terms of certain variances.
 High precision computation of  these variances as required to compute the LHS up to $k=9$ is given
 in  \cite[Table 7 for $\beta = 1$, Table 8 for $\beta = 2$]{Bo10}), thus allowing the accuracy of (\ref{3.9}) to be validated in both
 the cases $\beta = 1$ and $\beta = 2$. For example, with $k=9$ and $\beta = 2$, the high precision
 data implies $  {\rm cov}_\infty  (s_j(0), s_{j+k}(0)) = -0.00063203\dots$ whereas the asymptotic formula (\ref{3.9}) gives the value $-0.00063196\dots$.
With $k=9$ and $\beta = 1$, the high precision data gives  $  {\rm cov}_\infty  (s_j(0), s_{j+k}(0)) = -0.001263\dots$ while the
asymptotic formula gives $-0.001276\dots$.

We turn our attention now to explicit functional forms for $ \mathcal E_{\infty, \beta}$.
For $\beta =1,2$ and 4, Fredholm determinant formulas for $ \mathcal E_{\infty, \beta} ((0,s); 1 - e^{i \omega}) $
following from (\ref{2.5x}) and (\ref{K1a}), and $\tau$ function formulas following from
(\ref{2.5d}) and (\ref{Sa1a}), are well known \cite[Ch.~8 \& 9]{Fo10}. These have already been presented for $\beta = 2$ in \cite{RK23}.
For future reference, we make note of the  Fredholm determinant formula in this case. Thus introduce the kernel $ K_\infty(x,y)$ and
associated integral operator $\mathcal K_\infty^{(0,s)}$ as in (\ref{2.5c}).
Then one has
 the fundamental formula in the theory of matrix ensembles with unitary symmetry in bulk scaling,
 as it applies to the generating
 function for the conditioned gap probabilities \cite[Eq.~(9.21)]{Fo10},
  \begin{equation}\label{3.2i}
 \mathcal E_{\infty, 2} ((0,s); 1 - e^{i \omega}) = \det \Big (\mathbb I - (1 - e^{i \omega}) \mathcal K_\infty^{(0,s)} \Big ).
  \end{equation}

To state the analogue of (\ref{3.2i}) for $\beta = 1$ and 4, and also its companion giving a $\tau$-function evaluation \cite{JMMS80},
some notation is needed. The first relates to the limit of the generating function for the conditioned gap
probabilities in the orthogonal group, where we set \cite[Eq.~(8.101)]{Fo10}
\begin{equation}\label{3.3}
 \mathcal E^{O^\pm}((0,s);\xi) = \lim_{N \to \infty}  \mathcal E^{O^\pm(2N+1)}((0,\phi);\xi) \Big |_{\phi = \pi s/N}.
\end{equation}
Next, in terms of (\ref{2.5c}) introduce the kernels
\begin{equation}\label{3.3a}
K_\infty^\pm(x,y) =  K_\infty(x,y)  \pm K_\infty(x,-y) , 
 \end{equation}       
 together with the corresponding integral operators on $(0,s)$, denoted   $\mathcal K_\infty^{\pm,(0,s)}$. For the $\tau$ function form,
 introduce the particular
 $\sigma$PIII$'$ transcendents $v = v_\pm(t;\xi)$ as the solution of the corresponding $\sigma$PIII$'$ differential equation
 \begin{equation}\label{3.3da}
 (t v'')^2 - \tfrac{1}{4}(v')^2 + v'(4 v' - 1) (v - t v') = 0,
  \end{equation}  
  subject to the boundary conditions
   \begin{equation}\label{3.3dm}
   v_-(t;\xi) \mathop{\sim}\limits_{t \to 0^+}  {\xi \over \pi} t^{1/2} + {\rm O}(t), \quad  v_+(t;\xi) \mathop{\sim}\limits_{t \to 0^+}  {\xi \over 3 \pi} t^{3/2} + {\rm O}(t).
    \end{equation} 
 These quantities are related by the equalities
 \begin{equation}\label{3.3d}
 \mathcal  E^{O^\pm}((0,s);\xi) = \det (\mathbb I - \xi K_\infty^{\pm,(0,s)}) = \exp \bigg ( - \int_0^{(\pi s)^2} v_\pm(t;\xi) \, {dt \over t} \bigg )
 \end{equation} 
 (here the first equality follows from (\ref{K1a}), while the second can be found in \cite[Eq.~(8.102), after correction of the signs on the LHS]{Fo10}).
 
 The relevance of the above results for $\beta = 1$ and 4 is immediate upon using (\ref{3.3}) to take the $N \to \infty$ limit
 in (\ref{2.7}) and (\ref{2.7S}).
 
 \begin{prop} \label{P3.2} (\cite[Eqns.~(8.152) and (8.159)]{Fo10}) We have\footnote{It is furthermore known that $ \mathcal  E_{\infty, 2} ((0,s); \xi)  =
 \mathcal  E^{O^+}((0,s/2);\xi)  \mathcal  E^{O^-}((0,s/2);\xi)$ \cite[Eq.~(8.129)]{Fo10}, which in view of the first equality in  (\ref{3.3d}) provides
 a Fredholm determinant evaluation distinct from (\ref{3.2i}).}
 \begin{align*}
 \mathcal  E_{\infty, 1} ((0,s); \xi) & = \frac{1 - \xi}{2 - \xi} \mathcal E^{O^+}((0,s/2);\xi(2 - \xi)) +  \frac{1}{2 - \xi} \mathcal E^{O^-}((0,s/2);\xi(2-\xi)), \\
  \mathcal  E_{\infty, 4} ((0,s); \xi) & = {1 \over 2} \Big (  \mathcal  E^{O^+}((0,s);\xi) +  \mathcal E^{O^-}((0,s);\xi) \Big ).
   \end{align*}
 \end{prop}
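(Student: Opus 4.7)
The plan is to obtain both identities by passing to the bulk scaling limit $N \to \infty$ with $\phi = 2\pi s/N$ in the finite-$N$ identities (\ref{2.7}) and (\ref{2.7S}), and to recognise the resulting limits of the orthogonal-group generating functions via the definition (\ref{3.3}). The interchange of limit and algebraic operations is automatic: the coefficients on the right-hand sides of (\ref{2.7}) and (\ref{2.7S}) are $N$-independent, and uniform convergence of the relevant Fredholm determinants (\ref{K1a}) on compact intervals (as implicit in (\ref{3.2Y})) guarantees that the pointwise limits of the summands are the pointwise limits of $\mathcal{E}^{O^\pm(2N+1)}$.

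For the CSE case I would substitute $\phi = 2\pi s/N$ directly into (\ref{2.7S}). The left-hand side converges by definition to $\mathcal{E}_{\infty,4}((0,s);\xi)$, since $2\pi/N$ is the CSE mean spacing. On the right-hand side the argument $\phi/2 = \pi s/N$ is exactly the scaling appearing in (\ref{3.3}), so each term $\mathcal{E}^{O^\pm(2N+1)}((0,\pi s/N);\xi)$ converges to $\mathcal{E}^{O^\pm}((0,s);\xi)$. Halving and adding yields the second identity.

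For the COE case the one bookkeeping point is that (\ref{2.7}) features $O^\nu(N+1)$ with $\nu = (-)^N$, so the size of the orthogonal group has varying parity. I would restrict the limit to even $N$, so that $\nu = +1$ and $N+1$ is odd. Writing $N = 2M$, the groups become $O^\pm(2M+1)$, and the argument reduces to $\phi/2 = \pi s/N = \pi(s/2)/M$, matching (\ref{3.3}) at argument $s/2$; hence each term $\mathcal{E}^{O^\pm(2M+1)}((0,\pi(s/2)/M);\xi(2-\xi))$ converges to $\mathcal{E}^{O^\pm}((0,s/2);\xi(2-\xi))$. Combined with the $N$-independence of the coefficients $(1-\xi)/(2-\xi)$ and $1/(2-\xi)$, this gives the first identity. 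The left-hand side of (\ref{2.7}) converges along any sequence $N \to \infty$ to $\mathcal{E}_{\infty,1}((0,s);\xi)$ (the standard existence of the bulk scaling limit of the COE gap probabilities), so passing through the even-$N$ subsequence is not a loss.

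The closest thing to an obstacle is that if one instead took the limit along odd $N$, (\ref{2.7}) would force analogous scaling limits of even-sized orthogonal groups $O^\pm(2M)$ and $O^\pm(2M+2)$; since these are not the limits codified in (\ref{3.3}), matching the identity would require an extra step establishing that those alternative limits are consistent with the stated result. Restricting attention to even $N$ sidesteps this entirely, and the COE limit being independent of the parity of $N$ makes the restriction harmless.
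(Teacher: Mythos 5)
Your proposal is correct and follows essentially the same route as the paper, which states that the result is immediate upon using (\ref{3.3}) to take the $N\to\infty$ limit in (\ref{2.7}) and (\ref{2.7S}) (citing \cite[Eqns.~(8.152) and (8.159)]{Fo10} for the statement itself). Your additional care with the parity of $N$ in the COE case and the appeal to the uniform convergence (\ref{3.2Y}) merely makes explicit what the paper leaves implicit.
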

 
 \subsection{Numerical computation}\label{S3.2}
 In \cite{RK23}, both the Fredholm form (\ref{3.2i}), and its $\tau$ function companion, have been used to give an accurate numerical
 evaluation of $S_{\infty,2}(\omega)$, $0 < \omega < \pi$ (recall that the small $\omega$ form is already determined by 
 (\ref{3.2g}), furthermore supplemented by knowledge of explicit form of the O$(|\omega|)$ term). Here we will make use of
 the Fredholm forms implied by Proposition \ref{P3.2} in (\ref{3.2c}) to provide a numerical evaluation of $S_{\infty,\beta}(\omega)$
 for $\beta = 1, 4$.
 
 The optimal numerical procedure to compute a Fredholm determinant with analytic kernel $\xi K(x,y)$ is based on its
 definition in terms of eigenvalues (recall (\ref{2.5b}), although in the present setting the integral operator is no longer
 of finite rank) \cite{Bo08}. Thus one replaces the eigenvalue/ eigenfunction equation
  \begin{equation}\label{3.4}
  \xi \int_0^s K(x,y) \psi(y) \, dy = \lambda \psi(x)
 \end{equation} 
 by a system of $m$ linear equations
   \begin{equation}\label{3.4a}  
   \xi \sum_{k=1}^m w_j^{1/2} K(y_j,y_k) w_k^{1/2} \psi(y_k) \approx  \lambda \psi(y_j), \quad (j=1,\dots,m).
 \end{equation} 
 In this approximation the definition of a Fredholm determinant in terms of eigenvalues    (\ref{2.5b}) relates to
 the characteristic equation of the symmetric matrix $[ w_j^{1/2} K(y_j,y_k) w_k^{1/2} ]_{j,k=1}^m$ and
 gives 
   \begin{equation}\label{3.4b}  
   \det ( \mathbb I - \xi \mathbb K_s) \approx \det \Big [ \delta_{j,k} - \xi w_j^{1/2} K(y_j,y_k) w_k^{1/2} \Big ]_{j,k=1,\dots,m}.
 \end{equation}    
  An essential feature of the theory of \cite{Bo08} is that $\{y_j, w_j\}_{j=1}^m$ --- referred to as the abscissa and weights --- are to be chosen that $\int_0^s f(x) \, dx =
  \sum_{j=1}^m w_j(y_j) f(y_j)$ for all polynomials of degree less than $m^*$ (i.e.~$\{y_j, w_j\}_{j=1}^m$  correspond to an
  $m$-point quadrature rule of order $m^*$). The larger the value of $m$, the more accurate the approximation, which moreover
  is quantified in \cite{Bo08} in relation to (\ref{3.4b}).
  
  In Mathematica, the abscissa and weights can be extracted using a single  line of code, 
 e.g.
  \begin{equation}\label{3.4c}   
  {\tt \{absc, weights, errweights\} = NIntegrate`GaussBerntsenEspelidRuleData[n, prec]}
  \end{equation} 
   in relation to 
 Gaussian quadrature with $m=2n+1$, $m^* = 2m$. Here {\tt prec} is the precision, while {\tt errweights} are particular error
 weights which we do not require. Another convenient feature of Mathematica for present purposes
 is that the special function sinc$\,x$ is in-built (otherwise the issue of division by zero would
 be encountered for an evaluation at $x=0$).
 
 By the use of (\ref{3.4b}) in the case of the kernels (\ref{3.3a}), and considering
 (\ref{3.2c})  and  Proposition \ref{P3.2},
  we have available a computational
 scheme for  $S_{\infty,\beta}(\omega)$
 in the cases $\beta = 1, 4$. Actually, it is convenient to rescale the integral operators by the change
 of variables $x,y \mapsto sx,sy$. Then the new
 integral operator is on the interval $(0,1)$ and the new kernels are
  \begin{equation}\label{3.4d} 
  K_{\infty}^\pm(x,y;s) := K_{\infty}(x,y;s)  \pm K_{\infty}(x,-y;s), \quad  K_{\infty}(x,y;s) := s \, {\rm sinc}\, \pi s (x-y),
    \end{equation}    
    
    In the implementation of this procedure, it is found that for large $s$ the parameter $n$ in (\ref{3.4c}) needs to be taken
    as (approximately) $s$ to ensure accurate evaluation. Hence the speed of the evaluation decreases as $s$ increases. This
    presents a difficulty, as the slow oscillatory decay of the integrand, proportional to\footnote{This asymptotic form is evident from
    our numerical computations, although for $\omega$ near $\pi$ there is further modulated structure due to, most likely, a large
    amplitude associated with the known (for $\beta = 2$) oscillatory leading decaying correction term to the logarithm of (\ref{3.4e})
    \cite{KKMST09}; see Figure \ref{Fa5} for an example.}
  \begin{equation}\label{3.4e}  
  {\cos \omega s \over s^{\omega^2/(\pi^2 \beta)} }
     \end{equation}    
     as seen from the asymptotic formula (\ref{3.2f}), does not allow for a truncation of the integrand with negligible remainder. Fortunately there
     is a work around. First, the integrand was computed at close enough intervals between $0$ and $s^*$ (we found
     that a discretisation using a lattice of spacing $1/10$ was adequate, and for our test of the $\beta = 2$
     case we chose $s^*=100$). Then, as described in Section \ref{S2.2}, an interpolation function was formed, and the
     value of the integral in the range $(0,s^*)$ computed. Most importantly, with $s^*$ chosen so that $\cos \omega s^* = 1$
     (this detail was found to be essential for the stability of our numerics),
     we normalise (\ref{3.4e}) by 
       \begin{equation}\label{3.4m}  
     A_{\beta}(s^*;\omega) := (s^*)^{\omega^2/(\pi^2 \beta)} {\rm Re}  \, \mathcal E_{\infty,\beta}((0,s^*);1 - e^{i \omega}),
    \end{equation}    
     to give agreement with the value of the integrand at $s^*$,
     and take the value of the remainder due to the truncation as $ A_{\beta}(s^*;\omega)$ times
   \begin{equation}\label{3.4f}  
   \int_{s^*}^\infty        {\cos \omega s \over s^{\omega^2/(\pi^2 \beta)} } \, ds,
  \end{equation}  
 which itself is fast to compute using Mathematica. Here one first makes use of the equality in
 (\ref{3.2g}), for which the remaining task is to compute the integral over $(0,s^*)$.
 
  \begin{figure*}
\centering
\includegraphics[width=0.6\textwidth]{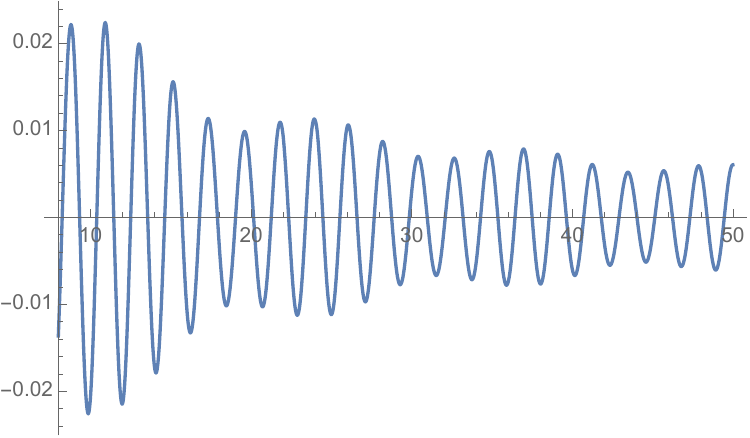}
\caption{Plot of  Re$\,  \mathcal E_{\infty, 2} ((0,s); 1 - e^{i \omega})$, for $s$ between 8 to 50
and with $\omega = 23 \pi/25$, exhibiting a modulated structure beyond the asymptotic form (\ref{3.4e}).}
\label{Fa5}
\end{figure*}
 
 The case $\beta = 2$ can be used as a  test bench for this procedure. This is due to the previously
 mentioned accurate (error no more than $10^{-6}$) tabulation  of $S_{\infty,2}(\omega)$
 in \cite{RK23}.\footnote{This was achieved using a combination of
 both its $\tau$-function evaluation, and that in terms of the Fredholm determinant (\ref{3.2i}) as $\omega$ approached $\pi$.
 In the former setting, a $\sigma$PV differential equation characterisation analogous to the second equality in (\ref{3.3d}) was used
 to achieve integration up to $s^*$ as large as $10^5$. When using the Fredholm determinant approach, for which the numerics becomes
 more costly as $s^*$ increases, values as large of $2 \times 10^4$ were achieved 
 using cluster (rather than desktop) computing. Throughout, the error estimates made use of (a corrected \cite[Eq.~(4.6)\&(4.7)]{RK23}) asymptotic expansion of the
 $\sigma$PV transcendent from \cite{CK15}.}
  We compare the results of our computation, making use
 too of the Fredholm determinants in (\ref{3.3d}) via the identity of the footnote associated with Proposition \ref{P3.2},
 in Table \ref{TA}. The displayed accuracy of 4 or 5 decimal places (typically improving as $\omega$ increases, which is to be
 expected as the exponent in (\ref{3.4e}) then increases while the amplitude $A_{\beta}(s^*;\omega)$ decreases) gives us confidence
 that our numerical approach is well founded.

  \vspace{.5cm}
  
 \begin{table}[ht]
  \centering
  \begin{tabular}{ll}
    \begin{tabular}{c|c|c}
$k$ &    $S_{\infty,2}^{\rm exact}(\omega)$ &  $S_{\infty,2}^{\rm approx}(\omega)$\\
    \hline & \\[-1.0em]
    1  & 0.629975 &   0.62992 \\
    2 & 0.312700 &  0.312651 \\
    3  & 0.207411& 0.207367 \\ 
   4 & 0.155436& 0.155395 \\
   5 &   0.124957 & 0.12492 \\
   6 & 0.105347 & 0.105322
    \end{tabular}\hfill
    &
    \begin{tabular}{c|c|c}
$k$ &       $S_{\infty,2}^{\rm exact}(\omega)$ &  $S_{\infty,2}^{\rm approx}(\omega)$\\
    \hline  & \\[-1.0em]
    7 & 0.092046  &  0.092037 \\
    8 &   0.082787 & 0.082789 \\  
    9  & 0.076329  & 0.076323 \\  
   10 & 0.071951  & 0.071945 \\  
   11 &  0.069228 & 0.069228 \\
   12 &  0.067922 & 0.067925
    \end{tabular}
     \end{tabular}
      \caption{Here $\omega = {2 \pi k/25}$, $S_{\infty,2}^{\rm exact}(\omega)$ are the values as calculated in \cite{RK23}
      in the case $\beta = 2$,
      with accuracy better than 6 decimals stated above, while $S_{\infty,2}^{\rm approx}(\omega)$ is
      the result of the present computational scheme with $s^*=100$. }
         \label{TA}
 \end{table}%
 
 The use of the  Fredholm determinants in (\ref{3.3d}) via the identity of the footnote associated with Proposition \ref{P3.2}
 in our testing of the proposed numerical scheme in the case $\beta = 2$ has, according to the first formula in Proposition
 \ref{P3.2}, the feature that the corresponding interval length is $s/2$. Recalling our findings reported above (\ref{3.4e})
 regarding the relation between the interval length and $n$ in (\ref{3.4c}), this implies that the computational cost for the
 $\beta = 1$ case of (\ref{3.2c}) using our scheme is essentially the same as for the $\beta = 2$ case just carried out.
 Moreover, there is no reason to expect that the results obtained to be any less accurate (four to five decimals, improving
 as $\omega$ increases). The results of our computation are tabulated in Table \ref{TA1}.
 
   \vspace{.5cm}
  
 \begin{table}[ht]
  \centering
  \begin{tabular}{llll}
    \begin{tabular}{c|c}
$k$ &    $S_{\infty,1}^{\rm approx}(\omega)$ \\
    \hline  & \\[-1.0em]
    ${1 \over 2}$  & 2.47539 \\
    1 &  1.2072 \\
    ${3 \over 2}$  &  0.784986 \\ 
   2 &   0.574758\\
   ${5 \over 2}$ &    0.449517 \\
   3 &  0.366891
    \end{tabular}\hfill
    &
    \begin{tabular}{c|c}
$k$ &       $S_{\infty,1}^{\rm approx}(\omega)$ \\
    \hline  & \\[-1.0em]
    ${7 \over 2}$ &    0.308633 \\
    4 &    0.26563 \\  
    ${9 \over 2}$  &  0.232872 \\  
   5 &   0.207143 \\  
   ${11 \over 2}$ &   0.18673 \\
   6 &  0.170155
    \end{tabular}\hfill
    &
    \begin{tabular}{c|c}
$k$ &       $S_{\infty,1}^{\rm approx}(\omega)$ \\
    \hline  & \\[-1.0em]
    ${13 \over 2}$ &  0.156537  \\
    7 &     0.145469 \\  
    ${15 \over 2}$  &   0.136032  \\  
   8 &   0.128457  \\  
   ${17 \over 2}$ &  0.121966  \\
   9 &  0.116653
     \end{tabular}\hfill
    &  
    \begin{tabular}{c|c}
$k$ &       $S_{\infty,1}^{\rm approx}(\omega)$ \\
    \hline & \\[-1.0em]
    ${19 \over 2 }$ &  0.112358  \\
    10 &  0.108769   \\  
    ${21 \over 2 }$ &   0.106038 \\  
   11 &    0.103944 \\  
   ${23 \over 2}$ &  0.102462  \\
   12 &  0.101619
    \end{tabular}
     \end{tabular}
      \caption{Here $\omega = {2 \pi k/25}$ while $S_{\infty,1}^{\rm approx}(\omega)$ is
      the result of our  computational scheme applied to the exact expression in
      Proposition \ref{P3.2} in the case $\beta = 1$. }
         \label{TA1}
 \end{table}%

 In distinction to the Fredholm determinant formulations for the evaluation of $S_{\infty,\beta}(\omega)$ used above
 for $\beta = 2$ and 1, according to the second  formula in Proposition  \ref{P3.2} the interval length in the
 Fredholm determinant formulation for $\beta = 4$ is $s$ rather than $s/2$. This means that for the same computational
 time, we can now only compute $  \mathcal  E_{\infty, 4} ((0,s); \xi)$ for $s$ between 0 and $s^*=50$, rather than
 between 0 and $s^*=100$ as before. One notes that for our range of $\omega = \pi k /25$ ($k=1,\dots,24$), it remains true that
 $s^* \omega$ is an integer, which we have previously remarked appears crucial to the success of our
 approach. To quantify the effect of this reduction, we first repeated the
 $\beta = 2$ computation, now restricted to $s^*=50$. Our results, presented in Table \ref{TA2}, show that the accuracy
 is barely changed to what was found when using $s^*=100$. This same effect was observed in
 relation to the $\beta = 1$ data of Table \ref{TA1} when reducing down to $s^*=50$. Thus the accuracy is 
 expected in relation to the results of our numerical scheme for the computation of
 $S_{\infty,\beta}(\omega)$ with $\beta = 4$, presented in Table \ref{TA3}, are expected to be comparable to those
 for $\beta = 1$, notwithstanding the smaller value of $s^*$.

   \vspace{.5cm}
  
 \begin{table}[ht]
  \centering
  \begin{tabular}{llll}
    \begin{tabular}{c|c}
$k$ &    $S_{\infty,2}^{* \, \rm approx}( \omega)$ \\
    \hline & \\[-1.0em]
    1 &   0.629870 \\
   2 &  0.312602 \\
   3 &  0.207322
    \end{tabular}\hfill
    &
    \begin{tabular}{c|c}
$k$ &       $S_{\infty,2}^{* \, \rm approx}(\omega)$ \\
    \hline & \\[-1.0em]
    4 &   0.155354 \\  
   5 &     0.124886 \\ 
   6 &  0.105293
    \end{tabular}\hfill
    &
    \begin{tabular}{c|c}
$k$ &       $S_{\infty,2}^{* \,\rm approx}(\omega)$ \\
    \hline & \\[-1.0em]
    7 &    0.092012 \\ 
   8 &    0.082765 \\ 
   9 &  0.076309
     \end{tabular}\hfill
    &
    \begin{tabular}{c|c}
$k$ &       $S_{\infty,2}^{* \, \rm approx}(\omega)$ \\
    \hline & \\[-1.0em]
    10 &   0.071934 \\  
   11 &    0.069219 \\ 
   12 & 0.067917
    \end{tabular}
     \end{tabular}
      \caption{Here $\omega = {2 \pi k/25}$ while $S_{\infty,2}^{* \,\rm approx}(\omega)$ is
      the result of our  computational scheme with $s^* = 50$ in the case $\beta = 2$. }
         \label{TA2}
 \end{table}%

    \vspace{.5cm}
  
 \begin{table}[ht]
  \centering
  \begin{tabular}{llll}
    \begin{tabular}{c|c}
$k$ &    $S_{\infty,4}^{* \,  \rm approx}(\omega)$ \\
    \hline & \\[-1.0em]
    1 &   0.322125 \\
   2 & 0.163706\\
   3  & 0.111115  
    \end{tabular}\hfill
    &
    \begin{tabular}{c|c}
$k$ &       $S_{\infty,4}^{* \, \rm approx}(\omega)$ \\
    \hline & \\[-1.0em]
    4 &   0.085100 \\  
   5 &    0.069785 \\ 
   6 & 0.059908
    \end{tabular}\hfill
    &
    \begin{tabular}{c|c}
$k$ &       $S_{\infty,4}^{* \,\rm approx}(\omega)$ \\
    \hline & \\[-1.0em]
    7 &  0.053163 \\ 
   8 &   0.048451 \\ 
   9 &  0.045171
     \end{tabular}\hfill
    &
    \begin{tabular}{c|c}
$k$ &       $S_{\infty,4}^{* \, \rm approx}(\omega)$ \\
    \hline & \\[-1.0em]
    10 &   0.042899 \\  
   11 &     0.041530 \\ 
   12 &   0.040840
    \end{tabular}
     \end{tabular}
      \caption{Here $\omega = {2 \pi k/25}$ while $S_{\infty,4}^{* \,\rm approx}(\omega)$ is
      the result of our  computational scheme with $s^* = 50$ in the case $\beta = 4$. }
         \label{TA3}
 \end{table}%
 
 We finish our discussion relating to the Fredholm determinant approach with some remarks relating to the approach to the large $N$ limit
 $S_{\infty,\beta}(\omega)$ via a sequence of increasing $N$ values of
 $S_{N,\beta}(\omega)$. Some oscillatory behaviour relating to this is already seen
 in Figures \ref{Fa2}, \ref{Fa3} and \ref{Fa4}. However, further investigation shows
monotonic convergence if the sequence of $N$ values is chosen such
 that $N \omega$ is a multiple of $2 \pi$. Some data is given in Table \ref{TA4},
 which is to be compared against the corresponding data in Table \ref{TA2}.
 The rate of convergence appears to be O$(1/N^2)$, with an $\omega$ dependent
 proportionality. Such a rate is known previously for the spacing distributions of the
 circular ensembles \cite{BFM17}.
 The stability of the numerics in this circumstance is analogous to our finding in
 relation to the computation of $S_{\infty,\beta}(\omega)$ with the criterium that
  $s^* \omega$ be a multiple of $2 \pi$ (recall the text above (\ref{3.4m})).

  \begin{table}[ht]
  \centering
  \begin{tabular}{llll}
    \begin{tabular}{c|c|c}
$k$ &    $S_{50,1}(\omega)$ &  $S_{100,1}(\omega)$ \\
    \hline  & \\[-1.0em]
    ${1 \over 2}$  & 2.47335  & 2.47454 \\[.2em]
 ${3 \over 2}$     & 0.783343 & 0.784283 \\[.2em]
    ${5 \over 2}$  & 0.448083 & 0.448897 \\ [.2em]
 ${7 \over 2}$   &  0.30734 &  0.30806 \\[.2em]
   ${9 \over 2}$ & 0.231662  &0.232307 \\[.2em]
    ${11 \over 2}$ & 0.185616  & 0.186201
    \end{tabular}\hfill
    &
    \begin{tabular}{c|c|c}
$k$ &       $S_{50,1}(\omega)$  &  $S_{100,1}(\omega)$\\
    \hline  & \\[-1.0em]
    ${13 \over 2}$ &  0.15556 &  0.156097 \\[.2em]
   ${15 \over 2}$   & 0.135148  &  0.135649 \\ [.2em] 
    ${17 \over 2}$  & 0.121063 & 0.121535 \\  [.2em]
 ${19 \over 2}$   &  0.111435  & 0.111886 \\ [.2em] 
   ${21 \over 2}$ & 0.10517 &  0.105606 \\ [.2em]
 ${23 \over 2}$  &   0.101632 &   0.102059
       \end{tabular}
     \end{tabular}
 \caption{Here $\omega = {2 \pi k/25}$ while $S_{N,1}(\omega)$ is
      computed according the procedure of Section \ref{S2.3}.}
         \label{TA4}
 \end{table}%
 
 As an alternative to the Fredholm determinant evaluation, we know from (\ref{3.3d}) that the
 key quantities $\mathcal E^{O^\pm}((0,s);\xi)$ permit a $\sigma{\rm PIII}'$ $\tau$ function 
 evaluation. We know too from the experience of \cite{RK23} as reported in the footnote associated
 with the paragraph below (\ref{3.4f}) that, provided $\omega$ is not close to $\pi$,  this is well suited to extending the numerics to larger
 $s^*$, and so improving their accuracy. Relevant here is the large $t$ form of the $\sigma {\rm PIII}'$ transcendent
 $v_\pm(t;\xi)$ in  (\ref{3.3d}), as there is still the task of estimating the error due to the eventual truncation at $s^*$. This is of independent
 interest as the connection problem for the small $t$ behaviour (\ref{3.3dm}) and its asymptotic form for $t \to \infty$, which is considered
 as fundamental in the theory of Painlev\'e systems. References addressing this problem in the case of $\sigma {\rm PIII}'$
  include \cite{Ki89,LS01}. However, there is no literal transcription from these works to our setting. We leave this
 problem for future research. For the time being, we put forward (checked by numerical comparisons)  a  conjecture.
 
 \begin{con}\label{C3.1}
 Let $\epsilon = \pm $, $\tilde{\omega} = \omega/2 \pi$ and specify $v_\epsilon(t;\xi)$ as the particular $\sigma {\rm PIII}'$ transcendent
 satisfying the differential equation (\ref{3.3da}) subject to   the $ t \to 0^+$  boundary condition (\ref{3.3dm}).
For $ 0 \le \omega < \pi$, up to a yet to be determined order, one has
   \begin{equation}\label{c.1}  
   v_\epsilon(t;\xi)|_{\xi = 1-e^{i \omega}} \mathop{\sim}\limits_{t \to \infty} {A_\epsilon(t;\omega) \over B_\epsilon(t;\omega) }
   \end{equation}
   where
   \begin{multline}\label{Ae}
 A_\epsilon(t;\omega)  = - i \tilde{\omega} \Big ( \sqrt{t} - {\sin^2(2\sqrt{t}) \over 4 \sqrt{t}} \tilde{\omega} \Big ) + {1 \over 2} \tilde{\omega}^2 \\
 -  {\epsilon (1 - \tilde{\omega}) \over 4 \pi} \Big (  \sin(\pi  \tilde{\omega} ) \Gamma^2(1 - \tilde{\omega}) (4 \sqrt{t})^{2  \tilde{\omega} } e^{-2i \sqrt{t}} + \overline{(  \tilde{\omega} \mapsto -  \tilde{\omega})} \Big )
 \end{multline}
 and
  \begin{equation} 
  B_\epsilon(t;\omega) = 1 + i {\epsilon \over \pi} \Big(  \sin(\pi \tilde{\omega}) \Gamma^2(1 - \tilde{\omega}) (4 \sqrt{t})^{2  \tilde{\omega} - 1} e^{-2i \sqrt{t}} + \overline{(  \tilde{\omega} \mapsto -  \tilde{\omega})} \Big).
  \end{equation} 
   \end{con}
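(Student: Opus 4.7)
The plan is to reduce the conjectured large-$t$ asymptotic of the $\sigma{\rm PIII}'$ transcendents $v_\pm(t;\xi)$ to a known-style large-$s$ asymptotic for a Fredholm determinant, which one then attacks by Riemann--Hilbert steepest descent. The first step uses the $\tau$-function identity in (\ref{3.3d}): differentiating
$$
\log \mathcal{E}^{O^\pm}((0,s);\xi) = -\int_0^{(\pi s)^2} v_\pm(t;\xi)\,\frac{dt}{t}
$$
in $s$ gives
$$
\frac{d}{ds}\log \mathcal{E}^{O^\pm}((0,s);\xi) = -\frac{2}{s}\,v_\pm((\pi s)^2;\xi),
$$
so the conjectured form (\ref{c.1}) is equivalent to a fully explicit large-$s$ asymptotic, with oscillatory corrections, for the logarithmic derivative of $\mathcal{E}^{O^\pm}((0,s);\xi)$. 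In particular the leading deterministic part $-i\tilde\omega\sqrt{t}$ corresponds to the drift $i\omega$ of the mean counting function, the $\tfrac12\tilde\omega^2$ constant reflects the $\log(4\pi s)$ Gaussian fluctuation term (cf.\ (\ref{3.2f}) and the footnote after it), and the oscillatory numerator/denominator structure encodes the subleading sine-kernel oscillations at frequency $2\pi s$ with modulation $s^{2\tilde\omega-1}$.

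The second step is to realise $\mathcal{E}^{O^\pm}$ as the reflected sine-kernel Fredholm determinant via the first equality in (\ref{3.3d}), i.e.\ $\det(\mathbb{I}-\xi\mathcal{K}_\infty^{\pm,(0,s)})$ with kernels (\ref{3.3a}). Changing variables and using the parity decomposition of $L^2(-s,s)$ into even/odd subspaces identifies the product
$$
\mathcal{E}^{O^+}((0,s);\xi)\,\mathcal{E}^{O^-}((0,s);\xi) = \det\bigl(\mathbb{I}-\xi\,\mathcal{K}_\infty^{(-s,s)}\bigr),
$$
whose large-$s$ asymptotic with $\xi=1-e^{i\omega}$ is the classical Deift--Its--Krasovsky result
$$
\log \det\bigl(\mathbb{I}-(1-e^{i\omega})\mathcal{K}_\infty^{(-s,s)}\bigr)
\sim 2i\omega s - 2\tilde\omega^2\log(4\pi s) + 2\log\bigl(G(1+i\tilde\omega)G(1-i\tilde\omega)\bigr) + \mathrm{osc},
$$
accompanied by the leading oscillatory correction proportional to $\sin(\pi\tilde\omega)\Gamma(1-\tilde\omega)^2(4\pi s)^{2\tilde\omega-1}e^{-2\pi i s}$ plus its complex conjugate. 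To separate the $\pm$ factors, one must run the same steepest descent analysis on the Its--Izergin--Korepin--Slavnov integrable system adapted to the reflected kernels $K_\infty^\pm$. The relevant $2\times 2$ Riemann--Hilbert problem has jump matrices supported on $(-s,s)$ with an extra involution symmetry coming from the reflection $y\mapsto -y$; on parabolic cylinder/Bessel parametrices this symmetry produces an additional $\pm$ sign at each endpoint, which is exactly the mechanism generating the $\epsilon=\pm$ in (\ref{Ae}).

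The third step is to assemble the asymptotic for $\log\mathcal{E}^{O^\pm}((0,s);\xi)$ from the steepest descent output, differentiate in $s$, and substitute $t=(\pi s)^2$ to read off the asymptotic of $v_\pm$. The $-i\tilde\omega\sqrt{t}$ and $\tfrac12\tilde\omega^2$ terms come from the deterministic and logarithmic contributions; the trigonometric improvement $\sqrt{t}-\sin^2(2\sqrt{t})\tilde\omega/(4\sqrt{t})$ originates from differentiating the oscillatory phase inside the expansion of $\log(B_\epsilon)$ and combining it with the leading drift; and the combinations $\sin(\pi\tilde\omega)\Gamma(1-\tilde\omega)^2(4\sqrt{t})^{2\tilde\omega}e^{-2i\sqrt{t}}+\overline{(\tilde\omega\mapsto-\tilde\omega)}$ in both $A_\epsilon$ and $B_\epsilon$ are precisely the leading oscillatory endpoint contributions from the parabolic cylinder parametrices, with the $(1-\tilde\omega)$ prefactor in $A_\epsilon$ arising because $A_\epsilon$ tracks the $s$-derivative of the exponent while $B_\epsilon$ tracks the generating amplitude itself.

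The main obstacle is the reflected endpoint analysis in Step 2: the textbook Deift--Its--Krasovsky argument handles the symmetric interval $(-s,s)$ for the unreflected sine kernel, but the $O^\pm$ decomposition requires a dedicated parametrix construction near $x=0$ (where the reflection $K_\infty(x,-y)$ forces a Bessel-type rather than parabolic cylinder local model), and a careful bookkeeping of the sign carried through this parametrix to show that it contributes with a definite $\epsilon=\pm$ phase in the combination given in (\ref{Ae}). A secondary difficulty is the matching of the $O(1)$ constant, which involves Barnes $G$-function identities that can be read off from the $\beta=1,4$ entries of (\ref{3.2z}); fortunately these are not needed in (\ref{c.1}), whose claim is only ``up to a yet to be determined order'', so a qualitative steepest descent analysis identifying the leading exponent, its first oscillatory correction, and its $\epsilon$-dependent sign suffices.
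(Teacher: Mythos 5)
First, note that the paper does not prove this statement: it is put forward explicitly as a conjecture, supported only by numerical comparison of (\ref{c.1}) against direct numerical integration of (\ref{3.3da}) with the boundary condition (\ref{3.3dm}) (see Figure \ref{Fa6}), and the authors state that the relevant $\sigma{\rm PIII}'$ connection problem has ``no literal transcription'' from the existing literature \cite{Ki89,LS01} and is left for future research. Your proposal is therefore attempting something the paper does not do, and it should be judged as a proof on its own terms. As such it is a sensible research programme --- the reduction $v_\pm((\pi s)^2;\xi)=-\tfrac{s}{2}\tfrac{d}{ds}\log\mathcal E^{O^\pm}((0,s);\xi)$ is correct, and the identification of the leading terms $-i\tilde\omega\sqrt t$ and $\tfrac12\tilde\omega^2$ with the drift and Gaussian-fluctuation parts of $\log\langle e^{i\omega\mathcal N}\rangle$ checks out --- but it is not a proof, and the gap sits exactly where the whole content of the conjecture lies.

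Concretely: your Step 2 hinges on the product identity $\mathcal E^{O^+}\mathcal E^{O^-}=\det(\mathbb I-\xi\mathcal K_\infty^{(-s,s)})$, but that identity only controls the sum $v_++v_-$, in which every $\epsilon$-dependent term of (\ref{Ae}) cancels. All of the oscillatory structure you need lives in the ratio $\mathcal E^{O^+}/\mathcal E^{O^-}$, and you defer its analysis to an unspecified ``dedicated parametrix construction'' near the reflection point --- which is precisely the open problem the authors flag. Moreover, the conjectured answer is a ratio $A_\epsilon/B_\epsilon$ whose expansion contains all harmonics $e^{-2ik\sqrt t}$ and which, per the remark following the conjecture, develops poles as $\omega\to\pi^-$; a first-order steepest-descent correction of the kind you describe (leading exponent plus one oscillatory term) can at best reproduce the linearisation of (\ref{c.1}) in the small quantity $(4\sqrt t)^{2\tilde\omega-1}e^{-2i\sqrt t}$, not the resummed rational form, and cannot explain the pole structure near $\omega=\pi$. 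Until the reflected-kernel Riemann--Hilbert analysis is actually carried out, with the $\epsilon$ sign tracked through the local parametrix and the subleading sector exponentiated into the denominator $B_\epsilon$, the argument establishes nothing beyond the $\epsilon$-independent terms that were already known.
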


\begin{figure*}
\centering
\includegraphics[width=100mm,height=85mm]{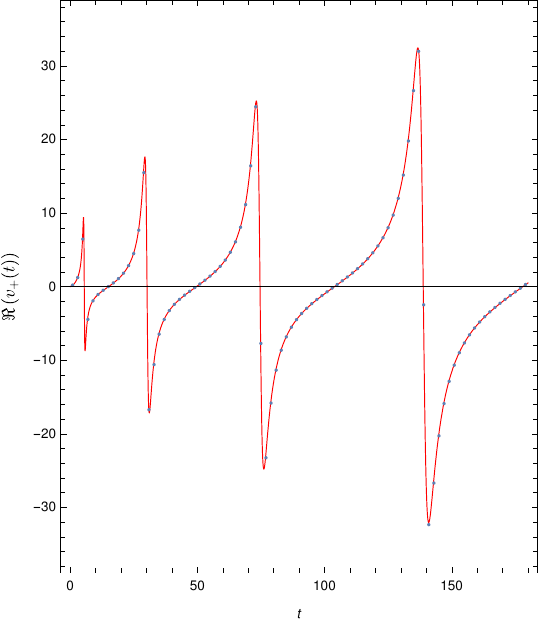}
\caption{Plot of  Re$\,  v_+ (t; 1 - e^{i \omega})$, for $t$ between 0 to 180
and with $\omega = 31 \pi/32$. Here the solid red line is using the proposed asymptotic form (\ref{c.1}), while the dots are the values obtained by a numerical solution of  (\ref{3.3da}) subject to the $ t \to 0^+$  boundary condition (\ref{3.3dm}).}
\label{Fa6}
\end{figure*}  
   
 \begin{remark}
 As $\omega \to \pi^-$, the right hand side of (\ref{c.1}) reduces to
    \begin{equation}\label{c.2}  
 {i \over 2} \sqrt{t} {e^{-2 i \sqrt{t}} + \epsilon i \over e^{-2 i \sqrt{t}} - \epsilon i }.
   \end{equation} 
This displays an infinite number of poles along the positive real $t$-axis, which in turn correspond to zeros of $\mathcal E^{O^\pm}((0,s);\xi)|_{\xi = 2}$.
Although $ v_\epsilon(t;\xi)|_{\xi = 1-e^{i \omega}}$ is analytic on the positive   real $t$-axis for $ 0 \le \omega < \pi$, numerical plots for
$\omega $ less than but close to $\pi$ show that its gradient
changes rapidly at the poles of (\ref{c.2}), and that its magnitude is large; see Figure \ref{Fa6} for an example. Due to this effect, numerical
computation for very large $t$ becomes unstable, in contrast to numerical
computation of  $ v_\epsilon(t;\xi)|_{\xi = 1-e^{i \omega}}$ with smaller values of $\omega$. 
\end{remark}

  \subsection*{Acknowledgements}
	This work  was supported
	by the Australian Research Council 
	 Discovery Project grant DP210102887.
	 Helpful feedback on the early draft by E.~Kanzieper
	 is acknowledged, as is correspondence with R.~Riser
	 providing the information reported in the footnote
	 associated with the paragraph below (\ref{3.4f}).

\providecommand{\bysame}{\leavevmode\hbox to3em{\hrulefill}\thinspace}
\providecommand{\MR}{\relax\ifhmode\unskip\space\fi MR }
\providecommand{\MRhref}[2]{%
  \href{http://www.ams.org/mathscinet-getitem?mr=#1}{#2}
}
\providecommand{\href}[2]{#2}


\begin{thebibliography}{10}

\bibitem{AGL21}
K.~Adhikari, S.~Ghosh and J.L.~Lebowitz,
\emph{Fluctuation and entropy in spectrally constrained random fields}, Commun. Math. Phys. \textbf{386}, (2021) 749--780.

\bibitem{BR01+}
J. Baik and E. M. Rains, 
\emph{The asymptotics of monotone subsequences of involutions}, Duke Math. J., \textbf{109} (2001), 205--281.

\bibitem{Bo08}
  F.~Bornemann, \emph{On the numerical evaluation of {F}redholm determinants}, Math.
  Comp. \textbf{79} (2010), 871--915.

\bibitem{Bo10}
  F.~Bornemann,  On the numerical evaluation of distributions in random
  matrix theory: a review, {\it Markov Processes Relat. Fields} 16 (2010), 803--866.
    
\bibitem{BF15}
F.~Bornemann and P.J. Forrester, \emph{Singular values and evenness symmetry in
  random matrix theory}, Forum Math. \textbf{28} (2016), 873--891.

\bibitem{BFM17}
F.~Bornemann, P.J. Forrester, and A.~Mays, \emph{Finite size effects for
  spacing distributions in random matrix theory: circular ensembles and Riemann
  zeros}, Stud. Appl. Math. \textbf{138} (2017), 401--437.

\bibitem{BFPW81}
T.A. Brody, J.~Flores, J.B. French, P.A. Mello, A.~Pandey, and S.S.M. Wong,
  \emph{Random-matrix physics: spectrum and strength fluctuations}, Rev. Mod. Phys. \textbf{53} (1981), 385--479.
  
\bibitem{CES23} 
  G. Cipolloni, L. Erd\H{o}s, D. Schr\"oder, \emph{On the spectral form factor for random matrices}, Commun. Math. Phys. 
  \textbf{401} (2023), 1665--1700.  
  
\bibitem{CG24}
  G.~Cipolloni and N.~Grometto, \emph{The Dissipative spectral form factor for i.i.d. matrices}, J. Stat. Phys. \textbf{191} (2024), 21.

\bibitem{CK15}
T. Claeys and I. Krasovsky, \emph{Toeplitz determinants with merging singularities}, Duke Math. J.
\textbf{164} (2015), 2897--2987. 
    
\bibitem{CGMY22}
 T.~Claeys, G.~Glesner, A.~Minakov and M.~Yang,  \emph{Asymptotics for averages over classical orthogonal ensembles}, 
 Int. Math. Res. Not. \textbf{2022} (2022), 7922--7966, https://doi.org/10.1093/imrn/rnaa354 
 
\bibitem{CL95}
O.~Costin and J.L. Lebowitz, \emph{Gaussian fluctuations in random matrices},
  Phys. Rev. Lett. \textbf{75} (1995), 69--72.

\bibitem{DLMF}
{NIST} {D}igital {L}ibrary of {M}athematical {F}unctions.
\newblock http://dlmf.nist.gov/, Release 1.1.10 of 2023-06-15.

\bibitem{Dy62}
F.J. Dyson, \emph{Statistical theory of energy levels of complex systems {I}},
  J. Math. Phys. \textbf{3} (1962), 140--156.
  
\bibitem{Dy62a}
F.J.~Dyson, \emph{Statistical theory of energy levels of complex systems {III}},
  J. Math. Phys. \textbf{3} (1962), 166--175.

\bibitem{Fo10}
P.J. Forrester, \emph{Log-gases and random matrices}, Princeton University Press,
  Princeton, NJ, 2010.
  
\bibitem{Fo19}  
  P.J.~Forrester, 
\emph{Meet Andr\'eief, Bordeaux 1886, and Andreev, Kharkov 1882-1883}, Random Matrices: Th. Appl. 
\textbf{8} (2019), 1930001 
  
\bibitem{Fo21a}
P.J.~Forrester, \emph{Differential identities for the structure function of some random matrix ensembles},
 J. Stat. Phys. \textbf{183} (2021), 33.
  
\bibitem{Fo23}
P.J. Forrester, \emph{A review of exact results for fluctuation formulas in random matrix theory},
Probab. Surveys \textbf{20} (2023), 170--225. 

\bibitem{Fo24}
P.J. Forrester, \emph{Ensemble inter-relations in random matrix theory}, preprint.

\bibitem{FF04}
 P.J. Forrester and N.E. Frankel,  
\emph{Applications and generalizations of Fisher-Hartwig asymptotics}, J. Math. Phys. \textbf{45} (2004), 2003--2028.

\bibitem{FKLZ24}
P.J. Forrester, M.~Kieburg, S.-H. Li and J.~Zhang, \emph{Dip-ramp-plateau for Dyson Brownian motion from the identity on $U(N)$},
Prob. Math. Phys. \textbf{5} (2024), 321--355.

\bibitem{FK24}
P.J.~Forrester and S.~Kumar, 
\emph{Computing marginal eigenvalue distributions for the Gaussian and Laguerre orthogonal ensemble}, preprint.

\bibitem{FR01}
P.J. Forrester and E.M. Rains, \emph{Inter-relationships between orthogonal, unitary and symplectic matrix
ensembles}, In P.M. Bleher and A.R. Its, editors,
\emph{Random matrix models and their applications}, volume 40 of \emph{Mathematical Sciences 
Research Institute Publications}, pages 171-208. Cambridge University Press, United Kingdom, 2001.

\bibitem{FS23}
P.J. Forrester and B.-J. Shen, 
\emph{Expanding the Fourier transform of the scaled circular {J}acobi $\beta$ ensemble density}, J. Stat. Phys. \textbf{190} (2023), 160.

\bibitem{FW04}
P.J. Forrester and N.S.~Witte,  
\emph{Application of the $\tau$-function theory of Painlev\'e equations to random matrices: PVI, the JUE, CyUE, cJUE and scaled limits}, Nagoya Math. J. \textbf{174} (2004), 29--114.

\bibitem{FW05}
P.J. Forrester and N. S. Witte, 
\emph{Discrete Painlev\'e equations for a class of PVI $\tau$-functions given as $U(N)$ averages}, Nonlinearity \textbf{18} (2005), 2061--2088.

\bibitem{FL20} 
  Y.V.~Fyodorov and P. Le Doussal,  \emph{Statistics of extremes in eigenvalue-counting staircases},
Phys. Rev. Lett. \textbf{124} (2020), 210602. 

\bibitem{Ga61} M.~Gaudin, 
\emph{Sur la loi limite de l'espacement des valeurs propres d'une matrice al\'eatoire}, Nucl. Phys. \textbf{25} (1961), 447--458.  

\bibitem{Ha18}
F.~Haake, S.~Gnutzman and M.~Ku\'s, \emph{Quantum signatures of chaos}, 4th ed., Springer, 2018.

\bibitem{JMMS80}
M.~Jimbo, T.~Miwa, Y.~M\^ori, and M.~Sato, 
\emph{Density matrix of an impenetrable {Bose} gas and the fifth {Painlev\'e} transcendent}, Physica \textbf{1D} (1980), 80--158.
  
\bibitem{Ki08}
R.~Killip, \emph{Gaussian fluctuations for $\beta$ ensembles}, Int. Math. Res. Not. \textbf{2008} (2008), http://dx.doi.org/10.1093/imrn/rnn007.
   
\bibitem{KN04}
R.~Killip and I.~Nenciu, \emph{Matrix models for circular ensembles}, Int. Math. Res. Not. \textbf{2004} (2004), 2665--2701.
  
\bibitem{Ki89}
  A. V. Kitaev, \emph{The method of isomonodromic deformations and the asymptotics of the solutions of the ``complete'' third
Painlev\'e equation}, Math. USSR Sbornik, \textbf{62} (1989), 421--444.
  
\bibitem{KKMST09}  
  N.~Kitanine, K.K.~Kozlowski, J.-M. Maillet, N.A. Slavnov and V. Terras, 
 \emph{Riemann-Hilbert approach to a generalised sine kernel and applications}, Commun. Math. Phys.  \textbf{291} (2009), 691--761.
  
\bibitem{Li58}
J. Lighthill,
\emph{Introduction to Fourier analysis and generalized functions},
Cambridge University Press, {1958}.

\bibitem{LS01}
Y. Lu and Z. Shao, 
\emph{Application of uniform asymptotics method to the asymptotics of the solutions of the fifth Painlev\'e equation when $\delta = 0$}, Appl. Anal. \textbf{79} (2001), 335--350.

\bibitem{Me04}
M.L. Mehta, \emph{Random matrices}, 3rd ed., Elsevier, San Diego, 2004.

\bibitem{MD63}
M.L. Mehta and F.J. Dyson, Statistical theory of the energy levels of complex systems. V, J. Math. Phys. \textbf{4} (1963),
713--719.

\bibitem{Ni21}
S.M. Nishigaki, \emph{Tracy-Widom method for Janossy density and joint distribution of extremal eigenvalues of random matrices}, Prog. Theor. Exp. Phys. \textbf{2021} (2021), 113A01.

\bibitem{Ni24}
S.M. Nishigaki, \emph{Distributions of consecutive level spacings of
GUE and their ratio: an analytic treatment},  arXiv:2407.15704.

\bibitem{Od87}
A.M. Odlyzko, \emph{On the distribution of spacings between zeros of the zeta
  function}, Math. Comput. \textbf{48} (1987), 273--308.
  
\bibitem{OR10} 
S. O'Rourke, 
\emph{Gaussian fluctuations of eigenvalues in Wigner random matrices}, J. Stat. Phys. \textbf{138} (2010), 1045--1066.

\bibitem{Po65}
C.E. Porter, \emph{Statistical theories of spectra: fluctuations}, Academic
  Press, New York, 1965.
  
\bibitem{RGMRF02}
A. Rela\~{n}o, J. M. G. G\'omez, R. A. Molina, J. Retamosa and E. Faleiro, 
\emph{Quantum chaos and $1/f$ noise}, Phys. Rev. Lett. \textbf{89} (2002), 244102.

\bibitem{RK21}
R. Riser and E. Kanzieper, \emph{Power spectrum and form factor in random diagonal matrices and
integrable billiards}, Ann. Phys. \textbf{425} (2021), 168393.

\bibitem{RK23}
R. Riser and E. Kanzieper, \emph{Power spectrum of the circular unitary ensemble},
Physica D \textbf{444} (2023), 133599.

\bibitem{ROK17} R.~Riser, V.A.~Osipov and E.~Kanzieper,
\emph{Power spectrum of long eigenlevel sequences in quantum chaotic systems}, Phys. Rev. Lett. \textbf{118} (2017), 204101. 

\bibitem{ROK20}
 R.~Riser, V.A.~Osipov and E.~Kanzieper, 
\emph{Nonperturbative theory of power spectrum in complex systems}, Ann. Phys. \textbf{413} (2020), 168065.

\bibitem{RTK23}
 R.~Riser, P.~Tian and E.~Kanzieper,
 \emph{Power spectra and autocovariances of level spacings beyond the Dyson conjecture},
 Phys. Rev. E \textbf{107} (2023), L032201.

\bibitem{SLMS21a}
N. R. Smith, P. Le Doussal, S. N. Majumdar, and G. Schehr,  \emph{Full counting statistics for interacting
trapped fermions}, SciPost Physics  \textbf{11} (2021), 110.

\bibitem{So00}
S.~Soshnikov, \emph{Determinantal random point fields}, Russian Math. Surveys
  \textbf{55} (2000), 923--975.

\bibitem{St73}
G.W. Stewart, \emph{Introduction to matrix computations}, Academic Press, New
York, 1973.

\bibitem{TRK24} 
P.~Tian, R.~Riser and E.~Kanzieper, \emph{Statistics of local level spacings in single-and many-body quantum chaos},
Phys. Rev. Lett.  \textbf{132} (2024), 220401. 


  
   \end{thebibliography}
  \end{document}